%%%%%%%%%%%%%%%%%%%%%%%%%%%%%%%%%%%%%%%%%%%%%%%%%%%%%%%%%%%%%%%%%%%%%%%%%%%%%%%%
%2345678901234567890123456789012345678901234567890123456789012345678901234567890
%        1         2         3         4         5         6         7         8

\documentclass[letterpaper, 10 pt, conference]{ieeeconf}  % Comment this line out if you need a4paper

\IEEEoverridecommandlockouts                              % This command is only needed if 
                                                          % you want to use the \thanks command

\overrideIEEEmargins                                      % Needed to meet printer requirements.

%In case you encounter the following error:
%Error 1010 The PDF file may be corrupt (unable to open PDF file) OR
%Error 1000 An error occurred while parsing a contents stream. Unable to analyze the PDF file.
%This is a known problem with pdfLaTeX conversion filter. The file cannot be opened with acrobat reader
%Please use one of the alternatives below to circumvent this error by uncommenting one or the other
%\pdfobjcompresslevel=0
%\pdfminorversion=4

% See the \addtolength command later in the file to balance the column lengths
% on the last page of the document

% The following packages can be found on http:\\www.ctan.org
\usepackage{graphicx} % for pdf, bitmapped graphics files
\usepackage{amsmath} % assumes amsmath package installed
\usepackage{amssymb}  % assumes amsmath package installed

%%% Extra Definitions %%%%%%%%%%%%%%%%%%%%%%%%%%%%%%%%%%%%%%%%%%%
\usepackage{bm}
\usepackage{color}
\usepackage{tabularx}
\usepackage{booktabs}
\usepackage{multirow}
\usepackage{mathtools}
\usepackage{algorithmic}
\usepackage{algorithm}
\usepackage{savesym}
\usepackage{units}
\savesymbol{AND}
\savesymbol{OR}
\savesymbol{NOT}
\savesymbol{TO}
\savesymbol{COMMENT}
\savesymbol{BODY}
\savesymbol{IF}
\savesymbol{ELSE}
\savesymbol{ELSIF}
\savesymbol{FOR}
\savesymbol{WHILE}

%extra operators
\newcommand{\norm}[1]{\left\lVert#1\right\rVert}

\newcommand{\diag}{\mathrm{diag}}

\newtheorem{theorem}{Theorem}

\newtheorem{remark}{Remark}

\newtheorem{lemma}{Lemma}
\newtheorem{corollary}{Corollary}

\makeatletter
\newcommand{\eqnum}{\refstepcounter{equation}\textup{\tagform@{\theequation}}}
\makeatother

% prevent word breaks (hyphenation)
\def\tube-#1{tube\nobreakdash-#1}
\def\df-#1{df\nobreakdash-#1}

\setlength{\textfloatsep}{6.0pt plus 2.0pt minus 2.0pt}

%%%%%%%%%%%%%%%%%%%%%%%%%%%%%%%%%%%%%%%%%%%%%%%%%%%%%%%%%%%%5

\title{\LARGE \bf
A System Level Approach to Tube-based Model Predictive Control
}

\author{Jerome Sieber$^{1}$, Samir Bennani$^{2}$, and Melanie N. Zeilinger$^{1}$% <-this % stops a space
\thanks{*This work was supported by the European Space Agency (ESA) under NPI 621-2018 and the Swiss Space Center (SSC).%
}
\thanks{$^{1}$J. Sieber and M. N. Zeilinger are members of the Institute for Dynamic Systems and Control (IDSC), ETH Zurich, 8092 Zurich, Switzerland
        {\tt\small \{jsieber,mzeilinger\}@ethz.ch}%
}
\thanks{$^{2}$S. Bennani is a member of ESA-ESTEC, Noordwijk 2201 AZ, The Netherlands
        {\tt\small samir.bennani@esa.int}%
}
}

\bibliographystyle{IEEEtran}

\begin{document}

\maketitle
\thispagestyle{empty}
\pagestyle{empty}

%%%%%%%%%%%%%%%%%%%%%%%%%%%%%%%%%%%%%%%%%%%%%%%%%%%%%%%%%%%%%%%%%%%%%%%%%%%%%%%%
\begin{abstract}
Robust tube-based model predictive control (MPC) methods address constraint satisfaction by leveraging an a~priori determined tube controller in the prediction to tighten the constraints. This paper presents a system level \tube-MPC~(SLTMPC) method derived from the system level parameterization (SLP), which allows optimization over the tube controller online when solving the MPC problem, which can significantly reduce conservativeness. We derive the SLTMPC method by establishing an equivalence relation between a class of robust MPC methods and the SLP. Finally, we show that the SLTMPC formulation naturally arises from an extended SLP formulation and show its merits in a numerical example.
\end{abstract}

\begin{keywords}
Robust control, optimal control, predictive control for linear systems.
\end{keywords}

%%%%%%%%%%%%%%%%%%%%%%%%%%%%%%%%%%%%%%%%%%%%%%%%%%%%%%%%%%%%%%%%%%%%%%%%%%%%%%%%
\section{INTRODUCTION}
The availability of powerful computational hardware for embedded systems together with advances in optimization software has established model predictive control (MPC) as the principal control method for constrained dynamical systems. MPC relies on a sufficiently accurate system model to predict the system behavior over a finite horizon. In order to address constraint satisfaction in the presence of bounded uncertainties, robust MPC methods~\cite{Bemporad1999} typically optimize over a feedback policy and tighten constraints in the prediction problem. The two most common robust MPC methodologies are disturbance feedback MPC (\df-MPC)~\cite{Lofberg2003,Goulart2006} and tube-based MPC~\cite{Chisci2001,Langson2004,Mayne2005}, which mainly differ in their parameterization of the feedback policy. Disturbance feedback parametrizes the input in terms of the disturbance, while tube-based methods split the system dynamics into nominal and error dynamics and parametrize the input in terms of those two quantities. In general, \df-MPC is computationally more demanding, yet less conservative than tube-based methods. In this paper, we present a tube-based MPC method, which lies at the intersection of these two extremes by offering a trade-off between computational complexity and conservativeness.

We derive the proposed method by leveraging the system level parameterization (SLP), which has recently received increased attention. It was introduced as part of the system level synthesis (SLS) framework~\cite{Anderson2019}, in particular in the context of distributed optimal control~\cite{Wang2018a}. The SLP offers the advantage that convexity is preserved under any additional convex constraint imposed on the parameterization and enables optimization over closed-loop trajectories instead of controller gains~\cite{Anderson2019}. We relate the SLP to \df-MPC and exploit this relation in order to formulate a system level \tube-MPC~(SLTMPC) method.

\textit{Related Work:} A tube-based MPC method reducing conservativeness by using multiple pre-computed tube controllers was introduced in~\cite{Kogel2020}. In contrast, the proposed SLTMPC method directly optimizes over the controller gains online. At the intersection of SLP and MPC, the first SLP-based MPC formulation was introduced in~\cite{Anderson2019} and then formalized as SLS-MPC in~\cite{Chen2020}, where both additive disturbances and parametric model uncertainties were integrated in the robust MPC problem. A more conservative variant of this approach was already introduced in the context of the linear quadratic regulator (LQR) in~\cite{Dean2018}. In a distributed setting, an SLP-based MPC formulation was proposed in~\cite{Alonso2019}, which builds on the principles of distributed SLS presented in~\cite{Wang2018a} and was later extended to distributed explicit SLS-MPC~\cite{Alonso2020} and layered SLS-MPC~\cite{Li2020}.

\textit{Contributions:} We consider linear time-invariant dynamical systems with additive uncertainties. In this setting, we first show the equivalence of \df-MPC and the SLP, before using this relation to analyze the inherent tube structure present in the SLP. Based on this analysis, we propose a SLTMPC formulation, which allows optimization over the tube controller in the online optimization problem and thus reduces conservativeness compared to other tube-based methods. Additionally, we show that the SLTMPC can be derived from an extended version of the SLP and outline extensions to distributed and explicit SLTMPC formulations. Finally, we show the effectiveness of our formulation on a numerical example.

The remainder of the paper is organized as follows: Section~\ref{sec:preliminaries} introduces the notation, the problem formulation, and relevant concepts for this paper. We present the equivalence relation between the SLP and \df-MPC in Section~\ref{sec:SL-perspective}, before deriving the SLTMPC formulation in Section~\ref{sec:SLTMPC}. Finally, Section~\ref{sec:numerical_section} presents a numerical application of SLTMPC and Section~\ref{sec:conclusions} concludes the paper.

\section{PRELIMINARIES}\label{sec:preliminaries}
\subsection{Notation}
In the context of matrices and vectors, the superscript~$^{r,c}$ denotes the element indexed by the r-th row and c-th column. Additionally, $^{:r,c:}$ refers to all rows from 1 to $r-1$ and all columns from $c$ to the end, respectively. The notation $\mathcal{S}^i$ refers to the i-th Cartesian product of the set $\mathcal{S}$, i.e. $\mathcal{S}^i = \mathcal{S} \times \dots \times \mathcal{S} = \left\lbrace (s_0, \dots, s_{i-1}) \mid s_j \in \mathcal{S} \; \forall j = 0, \dots, i-1\right\rbrace$.

\subsection{Problem Formulation}
We consider linear time-invariant (LTI) dynamical systems with additive disturbances of the form
\begin{equation}\label{eq:dynamics}
x_{k+1} = A x_k + B u_k + w_k,
\end{equation}
with $w_k \in \mathcal{W}$, where $\mathcal{W}$ is a compact set. Here, we assume time-invariant polytopic sets $\mathcal{W} = \{ w \in \mathbb{R}^n \mid S w \leq s \}$, but the results can easily be generalized to other compact convex sets. The system~\eqref{eq:dynamics} is subject to polytopic state and input constraints containing the origin in their interior
\begin{equation}\label{eq:constraints}
\mathcal{X} \!=\! \{ x \!\in\! \mathbb{R}^n \!\mid\! H_{x} x \leq h_{x}\}, \ \mathcal{U} \!=\! \{ u \!\in\! \mathbb{R}^m \!\mid\! H_{u} u \leq h_{u}\}.
\end{equation}
This paper presents a system level \tube-MPC formulation, i.e. a tube-based MPC formulation derived via a system level parameterization, which enables online optimization of the tube control law. To formulate this method, we first show the equivalence of the system level parameterization and disturbance feedback MPC, both of which are introduced in the following sections.

\subsection{System Level Parameterization (SLP)}
We consider the finite horizon version of the SLP, as proposed in~\cite{Anderson2019}. The main idea of the SLP is to parameterize the controller synthesis by the closed-loop system behavior, which restates optimal control as an optimization over closed-loop trajectories and renders the synthesis problem convex under convex constraints on the control law allowing, e.g., to impose a distributed computation structure~\cite{Wang2018a}. We define $\mathbf{x}$, $\mathbf{u}$, and $\mathbf{w}$ as the concatenated states, inputs, and disturbances over the horizon~$N$, respectively, and $\bm{\delta}$ as the initial state $x_0$ concatenated with the disturbance sequence $\mathbf{w}$: ${\mathbf{x} = [ x_0, x_1, \dots, x_N ]^\top}$, $\mathbf{u} = [ u_0, u_1, \dots, u_N ]^\top$, $\bm{\delta} = [ x_0, \mathbf{w} ]^\top$. The dynamics can then be compactly defined as trajectories over the horizon $N$,
\begin{equation}\label{eq:stacked_dynamics}
\mathbf{x} = \mathcal{ZA}\mathbf{x} + \mathcal{ZB}\mathbf{u} + \bm{\delta},
\end{equation}
where $\mathcal{Z}$ is the down-shift operator and $\mathcal{ZA}$, $\mathcal{ZB}$ are the corresponding dynamic matrices
\begin{equation*}
\mathcal{ZA} = \begin{bmatrix} 0 & \hspace{-.3em}\dots & \hspace{-.3em}\dots & \hspace{-.3em}0 \\ A  & \hspace{-.3em}0 & \hspace{-.3em}\dots & \hspace{-.3em}0 \\ \vdots & \hspace{-.3em}\ddots & \hspace{-.3em}\ddots & \hspace{-.3em}\vdots \\ 0 & \hspace{-.3em}\dots & \hspace{-.3em}A & \hspace{-.3em}0 \end{bmatrix}, \,
\mathcal{ZB} = \begin{bmatrix} 0 & \hspace{-.3em}\dots & \hspace{-.3em}\dots & \hspace{-.3em}0 \\ B & \hspace{-.3em}0 & \hspace{-.3em}\dots & \hspace{-.3em}0 \\ \vdots & \hspace{-.3em}\ddots & \hspace{-.3em}\ddots & \hspace{-.3em}\vdots \\ 0 & \hspace{-.3em}\dots & \hspace{-.3em}B & \hspace{-.3em}0 \end{bmatrix}.
\end{equation*}
In order to formulate the closed-loop dynamics, we define an LTV state feedback controller $\mathbf{u} = \mathbf{Kx}$, with
\begin{equation*}
\mathbf{K} = \begin{bmatrix} K^{0,0} & 0 & \dots & 0 \\ K^{1,1} & K^{1,0} & \dots & 0 \\ \vdots & \vdots & \ddots & 0 \\ K^{N,N} & K^{N,N-1} & \dots & K^{N,0} \end{bmatrix},
\end{equation*}
resulting in the closed-loop trajectory $\mathbf{x} = \left(\mathcal{ZA} + \mathcal{ZB}\mathbf{K}\right)\mathbf{x} + \bm{\delta} = \left(I - \mathcal{ZA} - \mathcal{ZB}\mathbf{K}\right)^{-1}\bm{\delta}$. The \emph{system response} is then defined by the closed-loop map~$\bm{\Phi}: \bm{\delta} \to \left(\mathbf{x}, \mathbf{u}\right)$ as
\begin{equation}\label{eq:SLP}
\begin{bmatrix} \mathbf{x} \\ \mathbf{u} \end{bmatrix} = \begin{bmatrix} \left(I - \mathcal{ZA} - \mathcal{ZB}\mathbf{K}\right)^{-1} \\ \mathbf{K}\left(I - \mathcal{ZA} - \mathcal{ZB}\mathbf{K}\right)^{-1} \end{bmatrix} \bm{\delta} = \begin{bmatrix} \bm{\Phi}_\mathbf{x} \\ \bm{\Phi}_\mathbf{u} \end{bmatrix} \bm{\delta} = \bm{\Phi} \bm{\delta}.
\end{equation}
The maps $\bm{\Phi}_\mathbf{x}$,\, $\bm{\Phi}_\mathbf{u}$ have a block-lower-triangular structure and completely define the behavior of the closed-loop system with feedback controller $\mathbf{K}$. Using these maps as the parametrization of the system behavior allows us to recast the optimal control problem in terms of $\bm{\Phi}_\mathbf{x}$, $\bm{\Phi}_\mathbf{u}$ instead of the state feedback gain $\mathbf{K}$ by exploiting the following theorem.

\begin{theorem}{(Theorem 2.1 in \cite{Anderson2019})}\label{theorem:SLS}
Over the horizon $N$, the system dynamics~\eqref{eq:dynamics} with block-lower-triangular state feedback law $\mathbf{K}$ defining the control action as $\mathbf{u} = \mathbf{Kx}$, the following are true:
\begin{enumerate}
	\item the affine subspace defined by
	\begin{equation}\label{SLS:affine-halfspace}\begin{bmatrix} I - \mathcal{ZA} & -\mathcal{ZB}\end{bmatrix} \begin{bmatrix} \bm{\Phi}_x \\ \bm{\Phi}_u \end{bmatrix} = I \end{equation} parametrizes all possible system responses~\eqref{eq:SLP},
	\item for any block-lower-triangular matrices $\bm{\Phi}_x$, $\bm{\Phi}_u$ satisfying~\eqref{SLS:affine-halfspace}, the controller $\mathbf{K} = \bm{\Phi}_u \bm{\Phi}_x^{-1}$ achieves the desired system response.
\end{enumerate}
\end{theorem}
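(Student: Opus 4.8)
The plan is to prove both claims by direct algebra, the key structural fact being that $\mathcal{ZA}$ and $\mathcal{ZB}$ are \emph{strictly} block-lower-triangular (the down-shift operator $\mathcal{Z}$ places every block one level below the diagonal), hence nilpotent with $(\mathcal{ZA}+\mathcal{ZB}\mathbf{K})^{N+1}=0$ for any block-lower-triangular $\mathbf{K}$.

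First I would establish \emph{achievability}: starting from an arbitrary block-lower-triangular $\mathbf{K}$, I show the system response in~\eqref{eq:SLP} is well defined and lies in the affine subspace~\eqref{SLS:affine-halfspace}. Since $\mathcal{ZA}+\mathcal{ZB}\mathbf{K}$ is strictly block-lower-triangular and therefore nilpotent, the Neumann series $(I-\mathcal{ZA}-\mathcal{ZB}\mathbf{K})^{-1}=\sum_{i=0}^{N}(\mathcal{ZA}+\mathcal{ZB}\mathbf{K})^{i}$ converges (it is a finite sum), so $\bm{\Phi}_x$ exists and is block-lower-triangular with identity diagonal blocks, and $\bm{\Phi}_u=\mathbf{K}\bm{\Phi}_x$ inherits the block-lower-triangular structure as a product of two such matrices. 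Substituting the definitions of $\bm{\Phi}_x,\bm{\Phi}_u$ into the left-hand side of~\eqref{SLS:affine-halfspace} and collecting terms yields $(I-\mathcal{ZA}-\mathcal{ZB}\mathbf{K})(I-\mathcal{ZA}-\mathcal{ZB}\mathbf{K})^{-1}=I$, so every realizable response satisfies~\eqref{SLS:affine-halfspace}.

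Next I would prove the converse inclusion in claim~1 together with claim~2. Take any block-lower-triangular $\bm{\Phi}_x,\bm{\Phi}_u$ satisfying~\eqref{SLS:affine-halfspace}. Comparing diagonal blocks on both sides of $(I-\mathcal{ZA})\bm{\Phi}_x-\mathcal{ZB}\bm{\Phi}_u=I$ — noting that $I-\mathcal{ZA}$ has identity diagonal blocks while $\mathcal{ZB}$ has zero diagonal blocks — forces the diagonal blocks of $\bm{\Phi}_x$ to equal $I$; hence $\bm{\Phi}_x$ is block-lower-triangular with unit diagonal and is invertible, with $\bm{\Phi}_x^{-1}$ again block-lower-triangular. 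Define $\mathbf{K}:=\bm{\Phi}_u\bm{\Phi}_x^{-1}$, which is block-lower-triangular. Writing $\bm{\Phi}_u=\mathbf{K}\bm{\Phi}_x$ in~\eqref{SLS:affine-halfspace} gives $(I-\mathcal{ZA}-\mathcal{ZB}\mathbf{K})\bm{\Phi}_x=I$, so $\bm{\Phi}_x=(I-\mathcal{ZA}-\mathcal{ZB}\mathbf{K})^{-1}$ and $\bm{\Phi}_u=\mathbf{K}(I-\mathcal{ZA}-\mathcal{ZB}\mathbf{K})^{-1}$, i.e.\ the pair $(\bm{\Phi}_x,\bm{\Phi}_u)$ is exactly the system response~\eqref{eq:SLP} induced by the state feedback $\mathbf{u}=\mathbf{K}\mathbf{x}$. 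This shows the affine subspace contains only realizable responses, completing claim~1, and that $\mathbf{K}=\bm{\Phi}_u\bm{\Phi}_x^{-1}$ achieves the prescribed response, which is claim~2.

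The only points I would treat with some care are (i) the use of the finite Neumann series and the closure of the block-lower-triangular structure under inversion and multiplication, both of which rest on nilpotency of the strictly-lower-triangular parts, and (ii) the diagonal-block comparison that yields invertibility of $\bm{\Phi}_x$; beyond these, the argument is routine bookkeeping and I anticipate no real obstacle.
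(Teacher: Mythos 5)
Your proof is correct and follows essentially the same route as the argument the paper relies on (it cites Theorem~2.1 of the SLS framework rather than reproducing the proof, but the ingredients it invokes elsewhere --- nilpotency of the strictly block-lower-triangular parts, the finite Neumann series, and the resulting invertibility of $\bm{\Phi}_x$ with identity diagonal blocks --- are exactly the ones you use). Both directions of your argument, the direct substitution for achievability and the diagonal-block comparison plus $\mathbf{K}=\bm{\Phi}_u\bm{\Phi}_x^{-1}$ for the converse, are the standard and correct proof of this statement.
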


\subsection{Robust Model Predictive Control (MPC)}
We define the optimal controller for system~\eqref{eq:dynamics} subject to~\eqref{eq:constraints} via a robust MPC formulation~\cite{Rawlings2009}, where we consider the quadratic cost $\sum_{k=0}^{N-1} \norm{x_k}_{Q_k}^2 + \norm{u_k}_{R_k}^2,$ with $Q_k, R_k$ the state and input weights at time step $k$, respectively, and $N$ the prediction horizon. For simplicity, we focus on quadratic costs, however the results can be extended to other cost functions fulfilling the standard MPC stability assumptions~\cite{Rawlings2009}. Using the compact system definition in~\eqref{eq:stacked_dynamics}, the robust MPC problem is then defined as
\begin{subequations}\label{MPC:generic}
	\begin{align}
		\min_{\bm{\pi}} \;\; & \mathbf{x}^\top \mathcal{Q} \mathbf{x} + \mathbf{u}^\top \mathcal{R} \mathbf{u}, \label{MPC:cost}\\
		\textrm{s.t. }        & \mathbf{x} = \mathcal{ZA}\mathbf{x} + \mathcal{ZB}\mathbf{u} + \bm{\delta} \\
		& \mathbf{x}^{:N} \in \mathcal{X}^N\!\!, \ \mathbf{x}^{N} \in \mathcal{X}_f, \ \mathbf{u} \in \mathcal{U}^N\!\!, \quad \forall\mathbf{w} \in \mathcal{W}^{N-1} \label{MPC:constraints}\\
        & \mathbf{u} = \bm{\pi}(\mathbf{x}, \mathbf{u}), \ \mathbf{x}^0 = x_k
	\end{align}
\end{subequations}
where \\
${\mathcal{Q} = \diag(Q_0, \dots, Q_{N-1}, P)}$, ${\mathcal{R} = \diag(R_0, \dots, R_{N-1}, 0)}$, with $P$ a suitable terminal cost matrix, $\mathcal{X}_f$ a suitable terminal set, and $\bm{\pi} = [ \pi^0, \dots, \pi^N ]^\top$ a vector of parametrized feedback policies. Available robust MPC methods can be classified according to their policy parameterization and constraint handling approach. Two of the most common policies are the tube policy~\cite{Mayne2005} and the disturbance feedback policy~\cite{Lofberg2003}, defined as
\begin{align}
\bm{\pi}^{tube} (\mathbf{x}, \mathbf{u}) &=  \mathbf{K}\left(\mathbf{x}-\mathbf{z}\right) + \mathbf{v}, \label{tube-policy}\\
\bm{\pi}^{df} (\mathbf{x}, \mathbf{u}) &= \mathbf{Mw} + \mathbf{v}, \label{df-policy}
\end{align}
where $\mathbf{z} = [ z_0, \dots, z_{N} ]^\top$, $\mathbf{v} = [ v_0, \dots, v_{N} ]^\top$, and
\begin{equation}\label{eq:M-structure}
\mathbf{K} \!=\! \!\begin{bmatrix} K & & \\ & \ddots & \\ & & K \end{bmatrix}\!, \;\
\mathbf{M} \!=\! \!\begin{bmatrix} 0 & \hspace{-.3em}\dots & \hspace{-.3em}0 \\ M_{1,0} & \hspace{-.3em}\dots & \hspace{-.3em}0 \\ \vdots & \hspace{-.3em}\ddots  & \hspace{-.3em}\vdots \\ M_{N,0} & \hspace{-.3em}\dots  & \hspace{-.3em}M_{N,N-1} \end{bmatrix}\!.
\end{equation}
The tube policy parameterization~\eqref{tube-policy} is based on splitting the system dynamics~\eqref{eq:stacked_dynamics} into nominal dynamics and error dynamics, i.e.
\begin{align}
\mathbf{z} &= \mathcal{ZA}\mathbf{z} + \mathcal{ZB}\mathbf{v}, \label{eq:nominal-dynamics}
 \\
\mathbf{x} - \mathbf{z} &= \mathbf{e} = \left(\mathcal{ZA} + \mathcal{ZB}\mathbf{K}\right)\mathbf{e} + \bm{\delta}. \label{eq:error-dynamics}
\end{align}
This allows for recasting~\eqref{MPC:generic} in the nominal variables $\mathbf{z},\, \mathbf{v}$ instead of the system variables $\mathbf{x},\, \mathbf{u}$, while imposing tightened constraints on the nominal variables. One way to perform the constraint tightening, which we will focus on in this paper, is via reachable sets for the error dynamics. In this paper, we will refer to this approach as \tube-MPC, which was first introduced in~\cite{Chisci2001}. For an overview of other tube-based methods or robust MPC in general, see e.g.~\cite{Rawlings2009}.

\section{A SYSTEM LEVEL APPROACH TO TUBE-BASED MODEL PREDICTIVE CONTROL}\label{sec:SL-perspective}
In this section, we present a new perspective on disturbance feedback MPC~(\df-MPC) by utilizing the SLP. In particular, we show the equivalence of SLP and \df-MPC, which includes the tube policy~\eqref{tube-policy} as a subclass~\cite{Rakovic2012}, and discuss the implications that arise.

Consider the disturbance feedback policy~\eqref{df-policy}. We define the convex set of admissible $(\mathbf{M}, \mathbf{v})$ as
\begin{equation*}
\Pi^{df}_N(x_0) = \left\lbrace (\mathbf{M}, \mathbf{v}) \middle| \!\begin{array}{l} \mathbf{M} \text{ structured as in~\eqref{eq:M-structure}} \\ \mathbf{x}^{:N} \in \mathcal{X}^N, \ \mathbf{u}^{df} \in \mathcal{U}^N, \\ \mathbf{x}^{N} \in \mathcal{X}_f, \ \forall \mathbf{w} \in \mathcal{W}^{N-1} \end{array} \right\rbrace,
\end{equation*}
and the set of initial states $x_0$ for which an admissible control policy of the form~\eqref{df-policy} exists is given by $X^{df}_N = \{x_0 \mid \Pi^{df}_N(x_0) \neq \emptyset \}$. Similarly, we define the convex set of admissible~$(\bm{\Phi}_\mathbf{x}, \bm{\Phi}_\mathbf{u})$ as
\begin{equation*}
\Pi^{SLP}_N(x_0) = \left\lbrace \! (\bm{\Phi}_\mathbf{x}, \bm{\Phi}_\mathbf{u}) \middle| \!\begin{array}{l} \bm{\Phi}_\mathbf{x}, \bm{\Phi}_\mathbf{u} \text{ satisfy~\eqref{SLS:affine-halfspace} and} \\ \text{are block-lower-triangular,} \\ \bm{\Phi}_\mathbf{x}^{:N,:}\bm{\delta} \in \mathcal{X}^N, \ \bm{\Phi}_\mathbf{u}\bm{\delta} \in \mathcal{U}^N, \\ \bm{\Phi}_\mathbf{x}^{N,:}\bm{\delta} \in \mathcal{X}_f, \ \forall \mathbf{w} \in \mathcal{W}^{N-1} \end{array} \right\rbrace,
\end{equation*}
and the set of initial states $x_0$ for which an admissible control policy $\mathbf{u}^{SLP} = \bm{\Phi}_\mathbf{u}\bm{\Phi}_\mathbf{x}^{-1}\mathbf{x}$ exists, as $X^{SLP}_N = \{x_0 \mid \Pi^{SLP}_N(x_0) \neq \emptyset \}$. To show the equivalence between MPC and SLP trajectories, we will rely on Lemma~\ref{lemma:MPC_to_SLS} and we formalize the equivalence in Theorem~\ref{theorem:equivalence}.

\begin{lemma}\label{lemma:MPC_to_SLS}
Consider the dynamics~\eqref{eq:stacked_dynamics} as a function of the initial state $x_0$
\begin{equation}\label{eq:function_of_x0}
\mathbf{x} = \mathbf{A}x_0 +\mathbf{B}\mathbf{u} + \mathbf{Ew},
\end{equation}
with $\mathbf{A}$, $\mathbf{B}$, and $\mathbf{E}$ defined as in Appendix~\ref{apx:proof-lemma}. Then, the following relation holds for any $\mathbf{u}$, any $\mathbf{w}$, and any admissible $(\bm{\Phi}_\mathbf{x}, \bm{\Phi}_\mathbf{u})$:
\begin{align}
\bm{\Phi}_\mathbf{x} &= \begin{bmatrix} \mathbf{A} & \mathbf{E} \end{bmatrix} + \mathbf{B}\bm{\Phi}_\mathbf{u}. \label{eq:rewritten_affine_halfspace}
\end{align}
\end{lemma}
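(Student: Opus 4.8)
The plan is to obtain~\eqref{eq:rewritten_affine_halfspace} directly from the affine subspace constraint~\eqref{SLS:affine-halfspace}; note that among the admissibility conditions only the fact that $(\bm{\Phi}_\mathbf{x},\bm{\Phi}_\mathbf{u})$ lies in the affine subspace~\eqref{SLS:affine-halfspace} is actually used, and the conclusion does not depend on the particular $\mathbf{u}$ or $\mathbf{w}$ (the ``for any $\mathbf{u}$, any $\mathbf{w}$'' phrasing just records that the rewritten dynamics~\eqref{eq:function_of_x0} hold universally). The enabling observation is that $\mathcal{ZA}$ is block strictly lower triangular, hence nilpotent, so $I-\mathcal{ZA}$ is invertible with block-lower-triangular inverse. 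Left-multiplying~\eqref{SLS:affine-halfspace} by $(I-\mathcal{ZA})^{-1}$ and rearranging gives
\begin{equation*}
\bm{\Phi}_\mathbf{x} = (I-\mathcal{ZA})^{-1} + (I-\mathcal{ZA})^{-1}\mathcal{ZB}\,\bm{\Phi}_\mathbf{u}.
\end{equation*}

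It then remains to identify the two terms on the right-hand side with $[\mathbf{A}\ \ \mathbf{E}]$ and $\mathbf{B}$. To this end I would solve the stacked dynamics~\eqref{eq:stacked_dynamics} explicitly for the state, $\mathbf{x} = (I-\mathcal{ZA})^{-1}\mathcal{ZB}\,\mathbf{u} + (I-\mathcal{ZA})^{-1}\bm{\delta}$, and split $\bm{\delta} = [x_0,\mathbf{w}]^\top$ into its $x_0$- and $\mathbf{w}$-blocks. Comparing with~\eqref{eq:function_of_x0} and using that $x_0$, $\mathbf{u}$, $\mathbf{w}$ are free (so the coefficient matrices are uniquely read off blockwise) yields $\mathbf{B} = (I-\mathcal{ZA})^{-1}\mathcal{ZB}$ and $[\mathbf{A}\ \ \mathbf{E}] = (I-\mathcal{ZA})^{-1}$, which are precisely the block definitions collected in Appendix~\ref{apx:proof-lemma}. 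Substituting these into the displayed expression gives~\eqref{eq:rewritten_affine_halfspace}. An equivalent trajectory-level argument: for admissible $(\bm{\Phi}_\mathbf{x},\bm{\Phi}_\mathbf{u})$ the response obeys $\mathbf{x}=\bm{\Phi}_\mathbf{x}\bm{\delta}$ and $\mathbf{u}=\bm{\Phi}_\mathbf{u}\bm{\delta}$, while~\eqref{eq:function_of_x0} gives $\mathbf{x}=([\mathbf{A}\ \ \mathbf{E}]+\mathbf{B}\bm{\Phi}_\mathbf{u})\bm{\delta}$, and since this holds for every $\bm{\delta}=[x_0,\mathbf{w}]^\top$ the matrices multiplying $\bm{\delta}$ coincide.

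The only real difficulty is bookkeeping rather than analysis: one must check that the block structure of $\mathbf{A}$, $\mathbf{B}$, $\mathbf{E}$ as defined in the appendix indeed matches the corresponding block columns and products of $(I-\mathcal{ZA})^{-1}$ — i.e. that the powers $A^k$, the terms $A^{k}B$, and the shifted blocks acting on the disturbance sequence land in the right positions — and that the horizon indexing and block dimensions are consistent throughout. No estimate or limiting argument is involved; the result is an exact algebraic identity stemming from the invertibility of $I-\mathcal{ZA}$.
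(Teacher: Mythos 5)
Your proposal is correct and follows essentially the same route as the paper's proof: both left-multiply the affine subspace constraint~\eqref{SLS:affine-halfspace} by $(I-\mathcal{ZA})^{-1}$ (which exists because $\mathcal{ZA}$ is nilpotent, so the Neumann series is finite) and then identify $(I-\mathcal{ZA})^{-1}=\begin{bmatrix}\mathbf{A} & \mathbf{E}\end{bmatrix}$ and $\mathbf{B}=(I-\mathcal{ZA})^{-1}\mathcal{ZB}$. The only cosmetic difference is that you read these identifications off by matching coefficients in the explicitly solved dynamics~\eqref{eq:function_of_x0}, whereas the paper states the truncated Neumann series identity directly; the content is the same.
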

\begin{proof}
See Appendix~\ref{apx:proof-lemma}.
\end{proof}

\begin{theorem}\label{theorem:equivalence}
Given any initial state $x_0 \in X^{df}_N$,~$(\mathbf{M}, \mathbf{v}) \in \Pi^{df}_N(x_0)$, and some disturbance sequence~$\mathbf{w} \in \mathcal{W}^{N-1}$, there exist $(\bm{\Phi}_\mathbf{x}, \bm{\Phi}_\mathbf{u}) \in \Pi^{SLP}_N(x_0)$ such that $\mathbf{x}^{df} = \mathbf{x}^{SLP}$ and $\mathbf{u}^{df} = \mathbf{u}^{SLP}$. The same statement holds in the opposite direction for any $x_0 \in X^{SLP}_N$,~$(\bm{\Phi}_\mathbf{x}, \bm{\Phi}_\mathbf{u}) \in \Pi^{SLP}_N(x_0)$, and disturbance~$\mathbf{w} \in \mathcal{W}^{N-1}$, therefore $X^{df}_N = X^{SLP}_N$.
\end{theorem}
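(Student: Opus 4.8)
The plan is to prove the two trajectory statements by explicit construction and then read off the set identity. Throughout, write $\bm{\delta} = [x_0,\mathbf{w}]^\top$ and split any candidate $\bm{\Phi}_\mathbf{u}$ into the block column $\bm{\Phi}_\mathbf{u}^{x_0}$ acting on $x_0$ and the remaining blocks $\bm{\Phi}_\mathbf{u}^{\mathbf{w}}$ acting on $\mathbf{w}$. The guiding observation is that both parameterizations encode nothing more than an affine-in-$\bm{\delta}$ closed-loop map, so matching a \df-MPC policy with a system response amounts to matching $\mathbf{M}$ with $\bm{\Phi}_\mathbf{u}^{\mathbf{w}}$ and the feedforward $\mathbf{v}$ with $\bm{\Phi}_\mathbf{u}^{x_0}x_0$, while Lemma~\ref{lemma:MPC_to_SLS} handles the state map for free.

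\textbf{From \df-MPC to SLP.} Fix $x_0 \in X^{df}_N$ with $x_0 \neq 0$ (the case $x_0 = 0$ is immediate, as $0$ lies in both feasible-initial-state sets), $(\mathbf{M},\mathbf{v}) \in \Pi^{df}_N(x_0)$, and $\mathbf{w} \in \mathcal{W}^{N-1}$. Set $\bm{\Phi}_\mathbf{u}^{\mathbf{w}} := \mathbf{M}$, choose $\bm{\Phi}_\mathbf{u}^{x_0} := \mathbf{v}x_0^\top/(x_0^\top x_0)$ so that $\bm{\Phi}_\mathbf{u}^{x_0}x_0 = \mathbf{v}$, and define $\bm{\Phi}_\mathbf{x} := \begin{bmatrix}\mathbf{A} & \mathbf{E}\end{bmatrix} + \mathbf{B}\bm{\Phi}_\mathbf{u}$. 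By Lemma~\ref{lemma:MPC_to_SLS} (equivalently~\eqref{eq:rewritten_affine_halfspace}) this pair satisfies the affine constraint~\eqref{SLS:affine-halfspace}. It then remains to check three points: (i) $\bm{\Phi}_\mathbf{u}$ is block-lower-triangular, since the sparsity pattern~\eqref{eq:M-structure} of $\mathbf{M}$ is exactly the causal pattern of feedback from $\mathbf{w}$ and the $x_0$-column may be dense because $x_0$ is known at time $0$; (ii) $\bm{\Phi}_\mathbf{x}$ is block-lower-triangular (with identity blocks on its diagonal, hence invertible), since $\begin{bmatrix}\mathbf{A} & \mathbf{E}\end{bmatrix}$ and $\mathbf{B}$ have this structure by their definitions in Appendix~\ref{apx:proof-lemma} and block-lower-triangularity is preserved under sums and products; (iii) for every $\mathbf{w}$, $\mathbf{u}^{SLP} = \bm{\Phi}_\mathbf{u}\bm{\delta} = \mathbf{v} + \mathbf{M}\mathbf{w} = \mathbf{u}^{df}$, whence $\mathbf{x}^{SLP} = \bm{\Phi}_\mathbf{x}\bm{\delta} = \mathbf{A}x_0 + \mathbf{E}\mathbf{w} + \mathbf{B}\mathbf{u}^{SLP} = \mathbf{x}^{df}$ by~\eqref{eq:function_of_x0}. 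Since the trajectories coincide for all $\mathbf{w} \in \mathcal{W}^{N-1}$ and $(\mathbf{M},\mathbf{v})$ is robustly feasible, $(\bm{\Phi}_\mathbf{x},\bm{\Phi}_\mathbf{u}) \in \Pi^{SLP}_N(x_0)$; in particular $X^{df}_N \subseteq X^{SLP}_N$.

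\textbf{From SLP to \df-MPC.} The converse is the mirror construction. Given $(\bm{\Phi}_\mathbf{x},\bm{\Phi}_\mathbf{u}) \in \Pi^{SLP}_N(x_0)$, set $\mathbf{M} := \bm{\Phi}_\mathbf{u}^{\mathbf{w}}$ and $\mathbf{v} := \bm{\Phi}_\mathbf{u}^{x_0}x_0$. Block-lower-triangularity of $\bm{\Phi}_\mathbf{u}$ forces $\bm{\Phi}_\mathbf{u}^{\mathbf{w}}$ into exactly the structure~\eqref{eq:M-structure}, so $(\mathbf{M},\mathbf{v})$ is an admissible policy of the form~\eqref{df-policy}; the input trajectories coincide by construction, and the state trajectories coincide because~\eqref{SLS:affine-halfspace} makes $\bm{\Phi}_\mathbf{x}\bm{\delta}$ the true response of~\eqref{eq:function_of_x0} to the input $\mathbf{u}^{SLP}$. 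Robust feasibility transfers exactly as before, so $(\mathbf{M},\mathbf{v}) \in \Pi^{df}_N(x_0)$ and $X^{SLP}_N \subseteq X^{df}_N$. Combining the two inclusions gives $X^{df}_N = X^{SLP}_N$.

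\textbf{Main obstacle.} The argument is conceptually short; the real effort is the structural bookkeeping — confirming that the $\mathbf{w}$-block of any admissible $\bm{\Phi}_\mathbf{u}$ has precisely the form~\eqref{eq:M-structure}, that adjoining a dense $x_0$-column to $\mathbf{M}$ again produces a block-lower-triangular $\bm{\Phi}_\mathbf{u}$, and that the induced $\bm{\Phi}_\mathbf{x} = \begin{bmatrix}\mathbf{A} & \mathbf{E}\end{bmatrix} + \mathbf{B}\bm{\Phi}_\mathbf{u}$ inherits this structure and remains invertible. The one point genuinely requiring care is representing the feedforward $\mathbf{v}$ as the image $\bm{\Phi}_\mathbf{u}^{x_0}x_0$ of the known initial state, which is exactly where Lemma~\ref{lemma:MPC_to_SLS} and the explicit choice of $\bm{\Phi}_\mathbf{x}$ do the heavy lifting.
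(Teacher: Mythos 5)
Your proof is correct and follows essentially the same route as the paper's: both directions proceed by the same explicit identifications ($\mathbf{M} \leftrightarrow \bm{\Phi}_\mathbf{u}^{:,1:}$, $\mathbf{v} \leftrightarrow \bm{\Phi}_\mathbf{u}^{:,0}x_0$) with $\bm{\Phi}_\mathbf{x}$ recovered via Lemma~\ref{lemma:MPC_to_SLS}. The only difference is that you make the paper's ``choose $\Phi_v$ such that $\Phi_v x_0 = \mathbf{v}$'' concrete via the rank-one matrix $\mathbf{v}x_0^\top/(x_0^\top x_0)$ and flag the $x_0=0$ case, which is if anything slightly more careful than the original.
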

\begin{proof}
We prove $X^{df}_N = X^{SLP}_N$ by proving the two set inclusions $X^{df}_N \subseteq X^{SLP}_N$ and $X^{SLP}_N \subseteq X^{df}_N$ separately. The detailed steps of the two parts are stated in Table~\ref{table:proof}.
\paragraph{$X^{df}_N \subseteq X^{SLP}_N$} Given $x_0 \in X^{df}_N$, an admissible $(\mathbf{M}, \mathbf{v}) \in \Pi^{df}_N(x_0)$ exists by definition. Using~\eqref{eq:function_of_x0}, we write the input and state trajectories for a given $\mathbf{w} \in \mathcal{W}^{N-1}$ as $\mathbf{u}^{df} = \mathbf{Mw} + \mathbf{v}$ and $\mathbf{x}^{df} = \mathbf{A}x_0 + \mathbf{B}\mathbf{u}^{df} + \mathbf{Ew}$, respectively. In order to show the existence of a corresponding $(\bm{\Phi}_\mathbf{x}, \bm{\Phi}_\mathbf{u})$, we choose $(\bm{\Phi}_\mathbf{x}, \bm{\Phi}_\mathbf{u})$ according to~\eqref{proof:SLP_param_choice} in Table~\ref{table:proof} and~$\Phi_v$ such that $\Phi_v x_0 = \mathbf{v}$ holds. Then, the derivations in Table~\ref{table:proof} show that the input and state trajectories of \df-MPC and SLP are equivalent. Therefore, ${x_0 \in X^{df}_N \implies x_0 \in X^{SLP}_N}$.
\paragraph{$X^{SLP}_N \subseteq X^{df}_N$} Given $x_0 \in X^{SLP}_N$, an admissible $(\bm{\Phi}_\mathbf{x}, \bm{\Phi}_\mathbf{u})$ exists by definition. Using~\eqref{eq:SLP} and a given $\mathbf{w} \in \mathcal{W}^{N-1}$ we write the state and input trajectories as $\mathbf{x}^{SLP} = \bm{\Phi}_\mathbf{x}\bm{\delta}$ and $\mathbf{u}^{SLP} = \bm{\Phi}_\mathbf{u}\bm{\delta}$, respectively. If $(\mathbf{M}, \mathbf{v})$ is chosen according to~\eqref{proof:df_param_choice} in Table~\ref{table:proof}, then these choices are admissible and the input and state trajectories for both \df-MPC and SLP are equivalent. Therefore, ${x_0 \in X^{SLP}_N \implies x_0 \in X^{df}_N}$.
\end{proof}

\begin{table}[h!]
\centering
\caption{Derivation of the equivalence between SLP and \df-MPC.}\label{table:proof}
\begin{tabular}{@{}cc@{}}
\toprule
$X^{df}_N \subseteq X^{SLP}_N$ & $X^{SLP}_N \subseteq X^{df}_N$ \\ \midrule[.1em]
{$\!\begin{aligned} \bm{\Phi}_\mathbf{x} &= \begin{bmatrix} \mathbf{A} & \hspace{-.5em}\mathbf{E} \end{bmatrix} + \mathbf{B}\bm{\Phi}_\mathbf{u} \\ \bm{\Phi}_\mathbf{u} &= \begin{bmatrix} \Phi_v & \hspace{-.5em}\mathbf{M} \end{bmatrix} \end{aligned} \;\;\eqnum\label{proof:SLP_param_choice}$} & {$\!\begin{aligned} \mathbf{v} = \bm{\Phi}_\mathbf{u}^{:,0} x_0, \; \mathbf{M} = \bm{\Phi}_\mathbf{u}^{:,1:} \end{aligned} \;\;\eqnum\label{proof:df_param_choice}$} \\ \cmidrule(lr){1-1} \cmidrule(lr){2-2} 
{$\!\begin{aligned} \mathbf{u}^{SLP} \!\!&= \bm{\Phi}_\mathbf{u} \bm{\delta} \mathrel{\overset{\makebox[0pt]{\mbox{\normalfont\tiny\sffamily \eqref{proof:SLP_param_choice}}}}{=}} \begin{bmatrix} \Phi_v & \hspace{-.5em}\mathbf{M} \end{bmatrix} \bm{\delta} \\ &= \Phi_v x_0 + \mathbf{Mw} \\ &= \mathbf{v} + \mathbf{Mw} = \mathbf{u}^{df} \\ \mathbf{x}^{SLP} \!\!&= \bm{\Phi}_\mathbf{x} \bm{\delta} \\ &\mathrel{\overset{\makebox[0pt]{\mbox{\normalfont\tiny\sffamily \eqref{proof:SLP_param_choice}}}}{=}} \begin{bmatrix} \mathbf{A} & \hspace{-.5em}\mathbf{E} \end{bmatrix} \bm{\delta} + \mathbf{B}\bm{\Phi}_\mathbf{u} \bm{\delta} \\ &= \mathbf{A}x_0 \!+\! \mathbf{Ew} \!+\! \mathbf{B}\mathbf{u}^{df} = \mathbf{x}^{df} \end{aligned}$} & {$\!\begin{aligned} \mathbf{u}^{df} \!\!&= \mathbf{Mw} + \mathbf{v} \\ &\mathrel{\overset{\makebox[0pt]{\mbox{\normalfont\tiny\sffamily \eqref{proof:df_param_choice}}}}{=}} \bm{\Phi}_\mathbf{u}^{:,1:}\mathbf{w} + \bm{\Phi}_\mathbf{u}^{:,0} x_0 \\ &= \bm{\Phi}_\mathbf{u}\bm{\delta} = \mathbf{u}^{SLP} \\ \mathbf{x}^{df} \!\!&= \mathbf{A}x_0 \!+\! \mathbf{Ew} \!+\! \mathbf{B}(\mathbf{Mw} \!+\! \mathbf{v}) \\ &= \begin{bmatrix} \mathbf{A} & \hspace{-.5em}\mathbf{E} \end{bmatrix} \bm{\delta} + \mathbf{B}\bm{\Phi}_\mathbf{u} \bm{\delta} \\ &\mathrel{\overset{\makebox[0pt]{\mbox{\normalfont\tiny\sffamily \eqref{eq:rewritten_affine_halfspace}}}}{=}} \bm{\Phi}_\mathbf{x} \bm{\delta} = \mathbf{x}^{SLP} \end{aligned}$} \\ \bottomrule 
\end{tabular}
\end{table}

\begin{remark}
Theorem~\ref{theorem:equivalence} is also valid for linear time-varying~(LTV) systems without any modifications to the proof, by adapting the definitions of $\mathcal{ZA}$ and $\mathcal{ZB}$ accordingly.
\end{remark}

Leveraging Theorem~\ref{theorem:equivalence}, the following insights can be derived: Compared with \df-MPC, the SLP formulation offers a direct handle on the state trajectory, which allows for directly imposing constraints on the state and simplifies the numerical implementation. Additionally, computational efficiency is increased by recently developed solvers for SLP problems based on dynamic programming~\cite{Tseng2020a} and ready-to-use computational frameworks like SLSpy~\cite{Tseng2020b}. The SLP formulation can thereby address one of the biggest limitations of \df-MPC, i.e. the computational effort required. SLP problems also facilitate the computation of a distributed controller due to the parameterization by system responses $\bm{\Phi}_\mathbf{x}$,\, $\bm{\Phi}_\mathbf{u}$, on which a distributed structure can be imposed through their support~\cite{Wang2018a}. This offers a direct method to also distribute \df-MPC problems. We will leverage Theorem~\ref{theorem:equivalence} to derive an improved \tube-MPC formulation as the main result of this paper in the following section.

\section{SYSTEM LEVEL TUBE-MPC (SLTMPC)}\label{sec:SLTMPC}
We first analyze the effect of imposing additional structure on the SLP, before deriving the SLTMPC formulation. Finally, we show that the SLTMPC formulation emerges naturally from an extended SLP formulation.

\subsection{Diagonally Restricted System Responses}
Following the analysis in~\cite{Rakovic2012}, it can be shown that disturbance feedback MPC is equivalent to a time-varying version of \tube-MPC. The same insight can be derived from the SLP by defining the nominal state and input as $\mathbf{z} \coloneqq \bm{\Phi}_\mathbf{x}^{:,0} x_0$ and $\mathbf{v} \coloneqq \bm{\Phi}_\mathbf{u}^{:,0} x_0$, respectively, and solving $\mathbf{x} = \bm{\Phi}_\mathbf{x}\bm{\delta}$ for the disturbance trajectory $\mathbf{w}$:
\begin{align*}
\mathbf{x} - \mathbf{z} &= \bm{\Phi}_\mathbf{x}^{:,1:}\mathbf{w}, \\
\begin{bmatrix} 0 \\ \mathbf{x}^{1:} - \mathbf{z}^{1:} \end{bmatrix} &= \begin{bmatrix} \bm{0} \\ \bm{\Phi}_\mathbf{x}^{1:,1:}\mathbf{w} \end{bmatrix}, \\
&\Rightarrow \mathbf{w} = \left[\bm{\Phi}_\mathbf{x}^{1:,1:}\right]^{-1} \left( \mathbf{x}^{1:} - \mathbf{z}^{1:} \right).
\end{align*}
Plugging this definition of $\mathbf{w}$ into $\mathbf{u} = \bm{\Phi}_\mathbf{u}\bm{\delta}$, yields $$ \mathbf{u} = \mathbf{v} + \bm{\Phi}_\mathbf{u}^{:,1:}\left[\bm{\Phi}_\mathbf{x}^{1:,1:}\right]^{-1} \left( \mathbf{x}^{1:} - \mathbf{z}^{1:} \right) = \mathbf{v} + \mathbf{K}^{:,1:} \left( \mathbf{x}^{1:} - \mathbf{z}^{1:} \right), $$ which reveals the inherent tube structure of the SLP and consequently \df-MPC, with the time-varying tube controller~$\mathbf{K}$. The main drawback of this formulation is its computational complexity, since the number of optimization variables grows quadratically in the horizon length~$N$, which motivates the derivation of a computationally more efficient \df-MPC formulation. Inspired by the infinite-horizon SLP formulation~\cite{Anderson2019}, we restrict~$\bm{\Phi}_\mathbf{x}$ and~$\bm{\Phi}_\mathbf{u}$ to have constant diagonal entries\footnote{In the infinite-horizon formulation, the diagonals are restricted in order to comply with the frequency-domain interpretation of the closed-loop system. Commonly, a finite impulse response (FIR) constraint is additionally imposed in this setting, requiring the closed-loop dynamics to approach an equilibrium after the FIR horizon, which we do not need here. For more details see~\cite{Anderson2019}.}, which corresponds to restricting~$\mathbf{K}$ to also have constant diagonal entries and therefore results in a more restrictive controller parameterization. This results in a tube formulation where the number of optimization variables is linear in the horizon length, alleviating some of the computational burden:
\begin{equation}\label{eq:mb_structure}
\bar{\bm{\Phi}}_\mathbf{x} \!=\!\!\begin{bmatrix} \hspace{-.1em}\Phi_x^0 & \hspace{-.3em}\dots & \hspace{-.3em}\dots & \hspace{-.3em}0 \\ \hspace{-.1em}\Phi_x^1  & \hspace{-.3em} \Phi_x^0 & \hspace{-.3em}\dots & \hspace{-.3em}0 \\ \hspace{-.1em}\vdots & \hspace{-.3em}\ddots & \hspace{-.3em}\ddots & \hspace{-.3em}\vdots \\ \hspace{-.1em}\Phi_x^N & \hspace{-.3em}\dots & \hspace{-.3em}\Phi_x^1 & \hspace{-.3em}\Phi_x^0 \end{bmatrix}\!\!, \:
\bar{\bm{\Phi}}_\mathbf{u} \!=\!\!\begin{bmatrix} \hspace{-.1em}\Phi_u^0 & \hspace{-.3em}\dots & \hspace{-.3em}\dots & \hspace{-.3em}0 \\ \hspace{-.1em}\Phi_u^1  & \hspace{-.3em} \Phi_u^0 & \hspace{-.3em}\dots & \hspace{-.3em}0 \\ \hspace{-.1em}\vdots & \hspace{-.3em}\ddots & \hspace{-.3em}\ddots & \hspace{-.3em}\vdots \\ \hspace{-.1em}\Phi_u^N & \hspace{-.3em}\dots & \hspace{-.3em}\Phi_u^1 & \hspace{-.3em}\Phi_u^0 \end{bmatrix}\!\!.
\end{equation}
Enforcing the structure~\eqref{eq:mb_structure} also alters the equivalence relation established by Theorem~\ref{theorem:equivalence}. If we interpret the imposed structure in a \tube-MPC framework, this corresponds to both the nominal and error system being driven by the controller~$\bar{\mathbf{K}}$, since~$\bar{\bm{\Phi}}_\mathbf{x}, \bar{\bm{\Phi}}_\mathbf{u}$ not only define the tube controller but also the behavior of the nominal system, i.e. $\bar{\mathbf{z}}=\bar{\bm{\Phi}}_\mathbf{x}^{:,0}x_0$ and $\bar{\mathbf{v}}=\bar{\bm{\Phi}}_\mathbf{u}^{:,0}x_0$. To illustrate this point, we exemplify it on the LTI system~\eqref{eq:dynamics} subject to a constant tube controller $\bar{K}$. This results in the input ${u_i = \bar{K} ( x_i - z_i) + v_i}$ and thus in a constant block-diagonal matrix ${\bar{\mathbf{K}} = \diag (\bar{K}, \dots, \bar{K})}$ over the horizon $N$. The associated system responses are computed using~\eqref{eq:SLP}, resulting in
\begin{equation*}
\bar{\bm{\Phi}}_\mathbf{x}^{\bar{K}} \!=\!\!\begin{bmatrix} \hspace{-.1em}I & \hspace{-.3em}\dots & \hspace{-.3em}\dots & \hspace{-.3em}0 \\ \hspace{-.1em}A_{\bar{K}}  & \hspace{-.3em} I & \hspace{-.3em}\dots & \hspace{-.3em}0 \\ \hspace{-.1em}\vdots & \hspace{-.3em}\ddots & \hspace{-.3em}\ddots & \hspace{-.3em}\vdots \\ \hspace{-.1em}A_{\bar{K}}^N & \hspace{-.3em}\dots & \hspace{-.3em}A_{\bar{K}} & \hspace{-.3em}I \end{bmatrix}\!, \;
\bar{\bm{\Phi}}_\mathbf{u}^{\bar{K}} \!=\!\!\begin{bmatrix} \hspace{-.1em}\bar{K} & \hspace{-.3em}\dots & \hspace{-.3em}\dots & \hspace{-.3em}0 \\ \hspace{-.1em}\bar{K}A_{\bar{K}}  & \hspace{-.3em} \bar{K} & \hspace{-.3em}\dots & \hspace{-.3em}0 \\ \hspace{-.1em}\vdots & \hspace{-.3em}\ddots & \hspace{-.3em}\ddots & \hspace{-.3em}\vdots \\ \hspace{-.1em}\bar{K}A_{\bar{K}}^N & \hspace{-.3em}\dots & \hspace{-.3em}\bar{K}A_{\bar{K}} & \hspace{-.3em}\bar{K} \end{bmatrix}\!,
\end{equation*}
where $A_{\bar{K}} = A+B\bar{K}$ and $A, \, B$ as in~\eqref{eq:dynamics}. Given these system responses, the states and inputs over the horizon $N$ are computed as $\mathbf{x} = \bar{\bm{\Phi}}_\mathbf{x}^{\bar{K}}\bm{\delta}$ and $\mathbf{u} = \bar{\bm{\Phi}}_\mathbf{u}^{\bar{K}}\bm{\delta}$, respectively, or in convolutional form for each component of the sequence
\begin{align*}
x_i \!&=\! \Phi_x^{\bar{K},i} x_0 \!+\! \sum_{j=0}^{i-1} \Phi_x^{\bar{K},i-1-j} w_j = A_{\bar{K}}^i x_0 \!+\! \sum_{j=0}^{i-1} A_{\bar{K}}^{i-1-j} w_j,\\
u_i \!&=\! \Phi_u^{\bar{K},i} x_0 \!+\! \sum_{j=0}^{i-1} \Phi_u^{\bar{K},i-1-j} w_j \!=\! \bar{K}\!A_{\bar{K}}^i x_0 \!+\! \bar{K}\!\sum_{j=0}^{i-1} \!A_{\bar{K}}^{i-1-j} w_j.
\end{align*}
The nominal states and nominal inputs are then defined by the relations
\begin{equation}\label{eq:nominalSys-example}
z_i = \Phi_x^{\bar{K},i} x_0 = A_{\bar{K}}^i x_0, \quad v_i = \Phi_u^{\bar{K},i} x_0 = \bar{K}A_{\bar{K}}^i x_0.
\end{equation}
Therefore,
\begin{equation}\label{eq:tube-example}
x_i = z_i + \sum_{j=0}^{i-1} A_{\bar{K}}^{i-1-j} w_j, \quad u_i = v_i + \bar{K} \left( x_i - z_i \right),
\end{equation}
where we use the fact that $\sum_{j=0}^{i-1} A_{\bar{K}}^{i-1-j} w_j = x_i - z_i$ to rewrite the input relation. As in tube-MPC~\cite[Section~3.2]{Chisci2001}, the uncertain state is described via the nominal state and the evolution of the error system $\sum_{j=0}^{i-1} A_{\bar{K}}^{i-1-j} w_j$ under the tube controller $\bar{K}$. In contrast to the \tube-MPC formulation, however, $z_i,\, v_i$ are not free variables here, but are governed by the equations~\eqref{eq:nominalSys-example}, i.e. $z_{i+1} = (A+B\bar{K})z_{i}$, which is an LTI system with feedback controller $\bar{K}$. Therefore, both the error system and the nominal system are driven by the same controller, which is generally suboptimal.

The following section derives an extended SLP that allows for optimizing the nominal system dynamics in the corresponding \tube-MPC interpretation independent of the error dynamics. This is achieved by adding the auxiliary decision variables~$\phi_z^i,\, \phi_v^i$ to the nominal variables $z_i$ and $v_i$, i.e. $\bar{z}_i = \Phi_x^{i} x_0 + \phi_z^i$, $\bar{v}_i = \Phi_u^{i} x_0 + \phi_v^i$. In vector notation, this results in
\begin{equation}\label{eq:extended_SLP}
\bar{\mathbf{z}} = \bar{\bm{\Phi}}_\mathbf{x}^{:,0}x_0 + \bm{\phi}_\mathbf{z}, \qquad \bar{\mathbf{v}} = \bar{\bm{\Phi}}_\mathbf{u}^{:,0}x_0 + \bm{\phi}_\mathbf{v}.
\end{equation}

\subsection{Affine System Level Parameterization}
We extend the SLP in order to capture formulations of the form~\eqref{eq:extended_SLP}. The following derivations are stated for general system responses $\bm{\Phi}_\mathbf{x}$, $\bm{\Phi}_\mathbf{u}$ but naturally extend to the restricted system responses $\bar{\bm{\Phi}}_\mathbf{x}$, $\bar{\bm{\Phi}}_\mathbf{u}$ without any modifications. We expand the state~$\tilde{\mathbf{x}} = \left[ 1\; \mathbf{x}\right]^\top$, the disturbance~${\tilde{\bm{\delta}} = \left[ 1\; \bm{\delta}\right]^\top}$, and appropriately modify the dynamics~\eqref{eq:stacked_dynamics} to obtain
\begin{equation}\label{eq:affine_dynamics}
\tilde{\mathbf{x}} = \mathcal{Z}\begin{bmatrix} 0 & \bm{0} \\ \bm{0} & \mathcal{A} \end{bmatrix}\tilde{\mathbf{x}} + \mathcal{Z}\begin{bmatrix} \bm{0} \\ \mathcal{B} \end{bmatrix}\mathbf{u} + \tilde{\bm{\delta}}
= \mathcal{Z}\tilde{\mathcal{A}}\tilde{\mathbf{x}} + \mathcal{Z}\tilde{\mathcal{B}}\mathbf{u} + \tilde{\bm{\delta}}.
\end{equation}
Then, the corresponding \emph{affine system responses} $\tilde{\bm{\Phi}}_\mathbf{x}, \tilde{\bm{\Phi}}_\mathbf{u}$, assuming an affine state feedback control law $\mathbf{u} = \tilde{\mathbf{K}}\tilde{\mathbf{x}} = \mathbf{Kx} + \mathbf{k}$, are defined as
\begin{equation}\label{eq:affine_structure}
\tilde{\bm{\Phi}}_\mathbf{x} = \begin{bmatrix} 1 & \bm{0} \\ \bm{\phi}_\mathbf{z} & \bm{\Phi}_\mathbf{e} \end{bmatrix}, \qquad \tilde{\bm{\Phi}}_\mathbf{u} = \begin{bmatrix} \bm{\phi}_\mathbf{v} & \bm{\Phi}_\mathbf{k} \end{bmatrix},
\end{equation}
where $\bm{\phi}_\star$ and $\bm{\Phi}_\star$ denote vector-valued and matrix-valued quantities, respectively. Given the particular structure of the affine system responses, we can extend Theorem~\ref{theorem:SLS} to the more general affine case.

\begin{theorem}\label{theorem:affine_SLS}
Consider the system dynamics~\eqref{eq:affine_dynamics} over a horizon $N$ with block-lower-triangular state feedback law~$\tilde{\mathbf{K}}$ defining the control action as $\mathbf{u} = \tilde{\mathbf{K}}\tilde{\mathbf{x}}$. The following statements are true:
\begin{enumerate}
	\item the affine subspace defined by
	\begin{equation}\label{affine_SLS:affine-halfspace}\begin{bmatrix} I - \mathcal{Z}\tilde{\mathcal{A}} & -\mathcal{Z}\tilde{\mathcal{B}}\end{bmatrix} \begin{bmatrix} \tilde{\bm{\Phi}}_\mathbf{x} \\ \tilde{\bm{\Phi}}_\mathbf{u} \end{bmatrix} = I \end{equation} parametrizes all possible affine system responses $\tilde{\bm{\Phi}}_\mathbf{x}$,~$\tilde{\bm{\Phi}}_\mathbf{u}$ with structure~\eqref{eq:affine_structure},
	\item for any block-lower-triangular matrices $\tilde{\bm{\Phi}}_\mathbf{x}$, $\tilde{\bm{\Phi}}_\mathbf{u}$ satisfying~\eqref{affine_SLS:affine-halfspace}, the controller $\tilde{\mathbf{K}} = \tilde{\bm{\Phi}}_\mathbf{u} \tilde{\bm{\Phi}}_\mathbf{x}^{-1}$ achieves the desired affine system response.
\end{enumerate}
\end{theorem}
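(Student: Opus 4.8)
The plan is to reduce Theorem~\ref{theorem:affine_SLS} to Theorem~\ref{theorem:SLS} by noting that the augmented dynamics~\eqref{eq:affine_dynamics} have exactly the same structural form as~\eqref{eq:stacked_dynamics}: the matrices $\mathcal{Z}\tilde{\mathcal{A}}$ and $\mathcal{Z}\tilde{\mathcal{B}}$ are again strictly block-lower-triangular (applying $\mathcal{Z}$ empties the leading block-row, and $\tilde{\mathcal{A}}$, $\tilde{\mathcal{B}}$ are block-diagonal apart from that zero row/column). Therefore Theorem~\ref{theorem:SLS}, applied with the pair $(\tilde{\mathbf{x}},\mathbf{u})$ in place of $(\mathbf{x},\mathbf{u})$, immediately gives that~\eqref{affine_SLS:affine-halfspace} together with block-lower-triangularity parametrizes precisely the closed-loop maps $\tilde{\bm{\Phi}}_\mathbf{x} = (I - \mathcal{Z}\tilde{\mathcal{A}} - \mathcal{Z}\tilde{\mathcal{B}}\tilde{\mathbf{K}})^{-1}$, $\tilde{\bm{\Phi}}_\mathbf{u} = \tilde{\mathbf{K}}\tilde{\bm{\Phi}}_\mathbf{x}$ of the augmented system under an arbitrary block-lower-triangular $\tilde{\mathbf{K}}$, with realization $\tilde{\mathbf{K}} = \tilde{\bm{\Phi}}_\mathbf{u}\tilde{\bm{\Phi}}_\mathbf{x}^{-1}$ and $\tilde{\bm{\Phi}}_\mathbf{x}$ invertible because its block-diagonal is the identity. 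All that then remains is to check that this generic statement is compatible with the affine block structure~\eqref{eq:affine_structure} and with the affine form $\mathbf{u} = \mathbf{Kx} + \mathbf{k}$ of the feedback law.

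For statement~1 I would write $\tilde{\mathbf{K}} = [\,\mathbf{k}\;\;\mathbf{K}\,]$ conformally with $\tilde{\mathbf{x}} = [\,1\;\;\mathbf{x}\,]^\top$, so that $\mathbf{u} = \tilde{\mathbf{K}}\tilde{\mathbf{x}}$ is exactly $\mathbf{u} = \mathbf{Kx} + \mathbf{k}$, and block-lower-triangularity of $\tilde{\mathbf{K}}$ in the augmented indexing (with the scalar block $1$ preceding all time steps) is equivalent to $\mathbf{K}$ being block-lower-triangular and $\mathbf{k}$ unconstrained. Since the block-rows of $\mathcal{Z}\tilde{\mathcal{A}}$ and $\mathcal{Z}\tilde{\mathcal{B}}$ associated with the constant coordinate are zero, that coordinate of $\tilde{\mathbf{x}}$ has trivial dynamics and is simply reset to $1$ by $\tilde{\bm{\delta}}$; hence the corresponding block-row of $I - \mathcal{Z}\tilde{\mathcal{A}} - \mathcal{Z}\tilde{\mathcal{B}}\tilde{\mathbf{K}}$, and therefore of its inverse $\tilde{\bm{\Phi}}_\mathbf{x}$, equals $[\,1\;\;\bm{0}\,]$, while $\tilde{\bm{\Phi}}_\mathbf{u}$ picks up no restriction on its first column. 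Identifying the lower-right block of $\tilde{\bm{\Phi}}_\mathbf{x}$ with $\bm{\Phi}_\mathbf{e}$, the column beneath the $1$ with $\bm{\phi}_\mathbf{z}$, and the two blocks of $\tilde{\bm{\Phi}}_\mathbf{u}$ with $\bm{\phi}_\mathbf{v}$, $\bm{\Phi}_\mathbf{k}$ then yields exactly~\eqref{eq:affine_structure}, with $\bm{\Phi}_\mathbf{e}$ and $\bm{\Phi}_\mathbf{k}$ block-lower-triangular.

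For statement~2 I would take any block-lower-triangular $\tilde{\bm{\Phi}}_\mathbf{x}$, $\tilde{\bm{\Phi}}_\mathbf{u}$ satisfying~\eqref{affine_SLS:affine-halfspace}. Reading off the block-row of~\eqref{affine_SLS:affine-halfspace} indexed by the constant coordinate and using that the matching block-rows of $\mathcal{Z}\tilde{\mathcal{A}}$, $\mathcal{Z}\tilde{\mathcal{B}}$ vanish forces $\tilde{\bm{\Phi}}_\mathbf{x}$ into the form in~\eqref{eq:affine_structure}; the identity-diagonal property then makes both $\tilde{\bm{\Phi}}_\mathbf{x}$ and its lower-right block $\bm{\Phi}_\mathbf{e}$ invertible, so the block-triangular inverse formula gives $\tilde{\bm{\Phi}}_\mathbf{x}^{-1}$ the same block pattern and $\tilde{\mathbf{K}} = \tilde{\bm{\Phi}}_\mathbf{u}\tilde{\bm{\Phi}}_\mathbf{x}^{-1} = [\,\bm{\phi}_\mathbf{v} - \bm{\Phi}_\mathbf{k}\bm{\Phi}_\mathbf{e}^{-1}\bm{\phi}_\mathbf{z}\;\;\; \bm{\Phi}_\mathbf{k}\bm{\Phi}_\mathbf{e}^{-1}\,]$, i.e. a causal affine law with $\mathbf{K} = \bm{\Phi}_\mathbf{k}\bm{\Phi}_\mathbf{e}^{-1}$ block-lower-triangular. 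Substituting this $\tilde{\mathbf{K}}$ back into the rearrangement $(I - \mathcal{Z}\tilde{\mathcal{A}} - \mathcal{Z}\tilde{\mathcal{B}}\tilde{\mathbf{K}})\tilde{\bm{\Phi}}_\mathbf{x} = I$ of~\eqref{affine_SLS:affine-halfspace} confirms $\tilde{\bm{\Phi}}_\mathbf{x} = (I - \mathcal{Z}\tilde{\mathcal{A}} - \mathcal{Z}\tilde{\mathcal{B}}\tilde{\mathbf{K}})^{-1}$ and $\tilde{\bm{\Phi}}_\mathbf{u} = \tilde{\mathbf{K}}\tilde{\bm{\Phi}}_\mathbf{x}$, i.e. that $\tilde{\mathbf{K}}$ realizes the prescribed affine system response.

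I expect the only genuine obstacle to be the structural bookkeeping rather than the algebra: one must be careful that ``block-lower-triangular $\tilde{\mathbf{K}}$ acting on the augmented state'' coincides with ``causal affine feedback'', and that the pattern~\eqref{eq:affine_structure} is neither stronger nor weaker than~\eqref{affine_SLS:affine-halfspace} plus block-lower-triangularity --- in particular that the top block-row $[\,1\;\;\bm{0}\,]$ of $\tilde{\bm{\Phi}}_\mathbf{x}$ is enforced automatically and need not be imposed as a separate constraint. Every algebraic identity used is the one already appearing in the proof of Theorem~\ref{theorem:SLS}, so no new computation is required.
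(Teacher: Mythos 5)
Your proof is correct, and it reaches the result by a different route than the paper for statement~2. You treat the augmented system~\eqref{eq:affine_dynamics} as just another strictly causal linear system --- noting that $\mathcal{Z}\tilde{\mathcal{A}}$ and $\mathcal{Z}\tilde{\mathcal{B}}$ remain strictly block-lower-triangular --- and invoke the generic SLS argument of Theorem~\ref{theorem:SLS} wholesale: invertibility of $\tilde{\bm{\Phi}}_\mathbf{x}$ from its identity diagonal (finite Neumann series), the one-line rearrangement $(I - \mathcal{Z}\tilde{\mathcal{A}} - \mathcal{Z}\tilde{\mathcal{B}}\tilde{\mathbf{K}})\tilde{\bm{\Phi}}_\mathbf{x} = I$ of~\eqref{affine_SLS:affine-halfspace}, and then the structural bookkeeping showing that the first block-row of~\eqref{affine_SLS:affine-halfspace} forces the top row $\begin{bmatrix}1 & \bm{0}\end{bmatrix}$ of $\tilde{\bm{\Phi}}_\mathbf{x}$, so that block-lower-triangularity plus~\eqref{affine_SLS:affine-halfspace} is exactly equivalent to the pattern~\eqref{eq:affine_structure}, and that block-lower-triangular $\tilde{\mathbf{K}}$ in the augmented indexing is exactly a causal affine law $\mathbf{u} = \mathbf{Kx} + \mathbf{k}$. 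All of these checks are sound, and your closed form $\tilde{\mathbf{K}} = \begin{bmatrix}\bm{\phi}_\mathbf{v} - \bm{\Phi}_\mathbf{k}\bm{\Phi}_\mathbf{e}^{-1}\bm{\phi}_\mathbf{z} & \bm{\Phi}_\mathbf{k}\bm{\Phi}_\mathbf{e}^{-1}\end{bmatrix}$ matches the paper's~\eqref{eq:affine_SLS_controller}. The paper instead verifies the fixed-point condition~\eqref{proof:condition} by expanding the block inverse explicitly and splitting it into a feedback part~\eqref{eq:feedback_part}/\eqref{affine_SLS:affine_halfspace_feedback_part} and an affine part~\eqref{eq:affine_part}/\eqref{affine_SLS:affine_halfspace_affine_part}. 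Your reduction is shorter and arguably cleaner as a proof of the theorem itself; what the paper's longer computation buys is precisely the intermediate identities~\eqref{affine_SLS:affine_halfspace_feedback_part} and~\eqref{affine_SLS:affine_halfspace_affine_part}, which are reused verbatim in Corollary~\ref{corollary:decoupling} to decouple the error and nominal dynamics --- with your approach those would have to be extracted separately afterwards. The only presentational caveat is that you should invoke the \emph{proof} of Theorem~\ref{theorem:SLS} (which works for any strictly block-lower-triangular $\mathcal{ZA}$, $\mathcal{ZB}$, including the heterogeneous block sizes of the augmented state) rather than its statement, which is phrased for the per-time-step LTI dynamics~\eqref{eq:dynamics}; this is exactly the step the paper also takes for statement~1.
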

\begin{proof}
\paragraph{Proof of 1)} Follows trivially from the proof of Theorem 2.1 in~\cite{Anderson2019}.
\paragraph{Proof of 2)} As in the proof of Theorem 2.1 in~\cite{Anderson2019}, $\tilde{\bm{\Phi}}_\mathbf{x}^{-1}$ exists since $\tilde{\bm{\Phi}}_\mathbf{x}$ admits a finite Neumann series\footnote{The inverse of a matrix $(I-A)^{-1}$ can be expressed as an infinite Neumann series, iff the matrix $(I-A)$ is stable. However, if $A$ is nilpotent the stability requirement is omitted and the inverse is defined by the truncated Neumann series~\cite{Stewart1998}.}. In order to show $\tilde{\mathbf{x}} = \tilde{\bm{\Phi}}_\mathbf{x}\tilde{\bm{\delta}}$, the following relation needs to hold
\begin{equation}\label{proof:condition}
\tilde{\bm{\Phi}}_\mathbf{x} = \left( I - \mathcal{Z}\tilde{\mathcal{A}} - \mathcal{Z}\tilde{\mathcal{B}}\tilde{\bm{\Phi}}_\mathbf{u}\tilde{\bm{\Phi}}_\mathbf{x}^{-1}\right)^{-1}\!\!\!\!.
\end{equation}
Plugging in the affine system responses and after some algebraic manipulations this relation becomes
\begin{equation*}
 \tilde{\bm{\Phi}}_\mathbf{x} =\! \begin{bmatrix} 1 & \bm{0} \\ -\mathcal{ZB}\left(\bm{\phi}_\mathbf{v} - \bm{\Phi}_\mathbf{k}\bm{\Phi}_\mathbf{e}^{-1}\bm{\phi}_\mathbf{z}\right) & I - \mathcal{ZA} - \mathcal{ZB}\bm{\Phi}_\mathbf{k}\bm{\Phi}_\mathbf{e}^{-1} \end{bmatrix}^{-1}\!\!\!\!\!\!\!,
\end{equation*}
which holds if the following two equations hold
\begin{align}
\bm{\Phi}_\mathbf{e} &= \left(I - \!\mathcal{ZA} - \mathcal{ZB}\bm{\Phi}_\mathbf{k}\bm{\Phi}_\mathbf{e}^{-1}\right)^{-1}\!\!, \label{eq:feedback_part} \\
\bm{\phi}_\mathbf{z} &= \bm{\Phi}_\mathbf{e}\mathcal{ZB}\left(\bm{\phi}_\mathbf{v} - \bm{\Phi}_\mathbf{k}\bm{\Phi}_\mathbf{e}^{-1}\bm{\phi}_\mathbf{z}\right)\!. \label{eq:affine_part}
\end{align}
Equation~\eqref{eq:feedback_part} is fulfilled if $\bm{\Phi}_\mathbf{e}$ and $\bm{\Phi}_\mathbf{k}$ satisfy
\begin{equation}\label{affine_SLS:affine_halfspace_feedback_part}
\begin{bmatrix} I - \mathcal{ZA} & -\mathcal{ZB}\end{bmatrix} \begin{bmatrix} \bm{\Phi}_\mathbf{e} \\ \bm{\Phi}_\mathbf{k} \end{bmatrix} = I
\end{equation}
as in~\cite{Anderson2019}, then~\eqref{eq:affine_part} can be rewritten as
\begin{align}
\bm{\Phi}_\mathbf{e}^{-1}\bm{\phi}_\mathbf{z} &= \mathcal{ZB}\left(\bm{\phi}_\mathbf{v} - \bm{\Phi}_\mathbf{k}\bm{\Phi}_\mathbf{e}^{-1}\bm{\phi}_\mathbf{z}\right), \nonumber \\
\left(I + \mathcal{ZB}\bm{\Phi}_\mathbf{k}\right)\bm{\Phi}_\mathbf{e}^{-1}\bm{\phi}_\mathbf{z} &= \mathcal{ZB}\bm{\phi}_\mathbf{v}, \nonumber \\
\left(I - \mathcal{ZA}\right)\bm{\Phi}_\mathbf{e}\bm{\Phi}_\mathbf{e}^{-1}\bm{\phi}_\mathbf{z} &= \mathcal{ZB}\bm{\phi}_\mathbf{v}, \nonumber \\
\begin{bmatrix} I - \mathcal{ZA} & -\mathcal{ZB}\end{bmatrix} \begin{bmatrix} \bm{\phi}_\mathbf{z} \\ \bm{\phi}_\mathbf{v} \end{bmatrix} &= \bm{0}, \label{affine_SLS:affine_halfspace_affine_part}
\end{align}
where we use~\eqref{affine_SLS:affine_halfspace_feedback_part} to eliminate $\bm{\Phi}_\mathbf{e}^{-1}$. The affine half-space~\eqref{affine_SLS:affine-halfspace} combines both~\eqref{affine_SLS:affine_halfspace_feedback_part} and~\eqref{affine_SLS:affine_halfspace_affine_part}, therefore~\eqref{proof:condition} holds.
\newline Finally, $\mathbf{u} = \tilde{\bm{\Phi}}_\mathbf{u}\tilde{\bm{\Phi}}_\mathbf{x}^{-1}\tilde{\mathbf{x}} = \tilde{\bm{\Phi}}_\mathbf{u}\tilde{\bm{\Phi}}_\mathbf{x}^{-1}\tilde{\bm{\Phi}}_\mathbf{x}\tilde{\bm{\delta}} = \tilde{\bm{\Phi}}_\mathbf{u}\tilde{\bm{\delta}}$ follows trivially.
\end{proof}

\begin{remark}
Theorem~\ref{theorem:affine_SLS} is also valid for LTV systems without any modifications to the proof, by adapting the definitions of $\mathcal{ZA}$ and $\mathcal{ZB}$ accordingly.
\end{remark}

The affine subspace~\eqref{affine_SLS:affine-halfspace} can also be separated and interpreted as the governing equation of two separate dynamical systems using the following corollary.

\begin{corollary}\label{corollary:decoupling}
Given an admissible $(\tilde{\bm{\Phi}}_\mathbf{x}, \tilde{\bm{\Phi}}_\mathbf{u})$ with structure~\eqref{eq:affine_structure}, the affine half-spaces~\eqref{affine_SLS:affine_halfspace_feedback_part} and~\eqref{affine_SLS:affine_halfspace_affine_part} define two decoupled dynamical systems, which evolve according to
\begin{align}
\mathbf{e} &= \mathcal{ZA}\mathbf{e} + \mathcal{ZB}\mathbf{u}_\mathbf{e} + \bm{\delta}, \label{error_system:dynamics}\\
\bm{\phi}_\mathbf{z} &= \mathcal{ZA}\bm{\phi}_\mathbf{z} + \mathcal{ZB}\mathbf{u}_\mathbf{z}, \label{nominal_system:dynamics}
\end{align}
where $\mathbf{e} = \mathbf{x} - \bm{\phi}_\mathbf{z}$. The inputs of the two systems are
${\mathbf{u}_\mathbf{e} = \bm{\Phi}_\mathbf{k}\bm{\Phi}_\mathbf{e}^{-1}\mathbf{e} = \bm{\Phi}_\mathbf{k}\bm{\delta}}$ and $\mathbf{u}_\mathbf{z} = \bm{\phi}_\mathbf{v}$, respectively.
\end{corollary}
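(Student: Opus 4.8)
The plan is to turn the two sub-constraints \eqref{affine_SLS:affine_halfspace_feedback_part} and \eqref{affine_SLS:affine_halfspace_affine_part} --- which were already shown in the proof of Theorem~\ref{theorem:affine_SLS} to be jointly equivalent to admissibility of $(\tilde{\bm{\Phi}}_\mathbf{x}, \tilde{\bm{\Phi}}_\mathbf{u})$ under the structure~\eqref{eq:affine_structure} --- into explicit time-domain recursions by right-multiplying each with the appropriate signal. First I would record the trajectory identity furnished by Theorem~\ref{theorem:affine_SLS}: from $\tilde{\mathbf{x}} = \tilde{\bm{\Phi}}_\mathbf{x}\tilde{\bm{\delta}}$ with $\tilde{\mathbf{x}} = \left[1\; \mathbf{x}\right]^\top$, $\tilde{\bm{\delta}} = \left[1\; \bm{\delta}\right]^\top$, and $\tilde{\bm{\Phi}}_\mathbf{x}$ as in~\eqref{eq:affine_structure}, the lower block row gives $\mathbf{x} = \bm{\phi}_\mathbf{z} + \bm{\Phi}_\mathbf{e}\bm{\delta}$, hence $\mathbf{e} = \mathbf{x} - \bm{\phi}_\mathbf{z} = \bm{\Phi}_\mathbf{e}\bm{\delta}$. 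Since $\bm{\Phi}_\mathbf{e}$ inherits invertibility from $\tilde{\bm{\Phi}}_\mathbf{x}$ by the same finite Neumann-series argument used in the proof of Theorem~\ref{theorem:affine_SLS}, this also yields $\bm{\Phi}_\mathbf{e}^{-1}\mathbf{e} = \bm{\delta}$, so the two expressions $\bm{\Phi}_\mathbf{k}\bm{\Phi}_\mathbf{e}^{-1}\mathbf{e}$ and $\bm{\Phi}_\mathbf{k}\bm{\delta}$ for $\mathbf{u}_\mathbf{e}$ agree.

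For the error system, I would right-multiply \eqref{affine_SLS:affine_halfspace_feedback_part}, i.e.\ $(I-\mathcal{ZA})\bm{\Phi}_\mathbf{e} - \mathcal{ZB}\bm{\Phi}_\mathbf{k} = I$, by $\bm{\delta}$ and substitute $\bm{\Phi}_\mathbf{e}\bm{\delta} = \mathbf{e}$ and $\bm{\Phi}_\mathbf{k}\bm{\delta} = \mathbf{u}_\mathbf{e}$ to obtain $(I-\mathcal{ZA})\mathbf{e} - \mathcal{ZB}\mathbf{u}_\mathbf{e} = \bm{\delta}$, which rearranges to~\eqref{error_system:dynamics}. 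For the nominal system, \eqref{affine_SLS:affine_halfspace_affine_part} reads $(I-\mathcal{ZA})\bm{\phi}_\mathbf{z} - \mathcal{ZB}\bm{\phi}_\mathbf{v} = \bm{0}$; setting $\mathbf{u}_\mathbf{z} = \bm{\phi}_\mathbf{v}$ and rearranging gives~\eqref{nominal_system:dynamics}. The decoupling claim is then immediate by inspection: the recursion~\eqref{error_system:dynamics} is driven only by $\bm{\delta}$ and by the feedback $\mathbf{u}_\mathbf{e} = \bm{\Phi}_\mathbf{k}\bm{\Phi}_\mathbf{e}^{-1}\mathbf{e}$, which is a function of $\mathbf{e}$ alone, whereas~\eqref{nominal_system:dynamics} is driven solely by the free signal $\mathbf{u}_\mathbf{z} = \bm{\phi}_\mathbf{v}$; neither recursion references the state of the other, so the single affine half-space~\eqref{affine_SLS:affine-halfspace} splits into the two independent subsystems claimed. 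As a byproduct one also reads off $\mathbf{u} = \tilde{\bm{\Phi}}_\mathbf{u}\tilde{\bm{\delta}} = \bm{\phi}_\mathbf{v} + \bm{\Phi}_\mathbf{k}\bm{\delta} = \mathbf{u}_\mathbf{z} + \mathbf{u}_\mathbf{e}$, i.e.\ the physical input is the superposition of the two subsystem inputs.

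I do not anticipate a substantive obstacle, as every step is an algebraic rearrangement of relations already established while proving Theorem~\ref{theorem:affine_SLS}; the only real care is bookkeeping the augmented $1$-row and $1$-column of~\eqref{affine_SLS:affine-halfspace}, i.e.\ checking that its top block row reduces to the trivial identity $1 = 1$ and that the remaining block rows partition exactly into the columns multiplying $\bm{\delta}$ (yielding~\eqref{affine_SLS:affine_halfspace_feedback_part}) and the column multiplying the constant~$1$ (yielding~\eqref{affine_SLS:affine_halfspace_affine_part}), so that no cross term is discarded. It is also worth stating explicitly that $\mathbf{e} = \bm{\Phi}_\mathbf{e}\bm{\delta}$ depends on $x_0$ only through $\bm{\delta}$, which makes precise the sense in which this error coincides with the tube error discussed earlier in Section~\ref{sec:SLTMPC}.
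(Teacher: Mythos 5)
Your proof is correct, but it takes a genuinely different (and shorter) route than the paper's. The paper works \emph{forward} from the closed-loop dynamics: it forms the controller $\tilde{\bm{\Phi}}_\mathbf{u}\tilde{\bm{\Phi}}_\mathbf{x}^{-1}$, explicitly inverts the block structure of $\tilde{\bm{\Phi}}_\mathbf{x}$, substitutes into $\tilde{\mathbf{x}} = \mathcal{Z}\tilde{\mathcal{A}}\tilde{\mathbf{x}} + \mathcal{Z}\tilde{\mathcal{B}}\tilde{\mathbf{K}}\tilde{\mathbf{x}} + \tilde{\bm{\delta}}$, adds $-(I-\mathcal{ZA})\bm{\phi}_\mathbf{z}$ to both sides, and only then invokes \eqref{affine_SLS:affine_halfspace_affine_part} to cancel the nominal terms and arrive at $\mathbf{e} = \bm{\Phi}_\mathbf{e}\bm{\delta}$ as the closed loop of \eqref{error_system:dynamics} under $\mathbf{u}_\mathbf{e} = \bm{\Phi}_\mathbf{k}\bm{\Phi}_\mathbf{e}^{-1}\mathbf{e}$. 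You instead work \emph{backward} from the system-response identity $\tilde{\mathbf{x}} = \tilde{\bm{\Phi}}_\mathbf{x}\tilde{\bm{\delta}}$ already furnished by Theorem~\ref{theorem:affine_SLS}, read off $\mathbf{e} = \bm{\Phi}_\mathbf{e}\bm{\delta}$ directly from the block structure \eqref{eq:affine_structure}, and then obtain both recursions by right-multiplying \eqref{affine_SLS:affine_halfspace_feedback_part} by $\bm{\delta}$ and rearranging \eqref{affine_SLS:affine_halfspace_affine_part}. Your route avoids the block inversion and the cancellation step entirely, at the price of leaning more heavily on the conclusion of Theorem~\ref{theorem:affine_SLS} (which is legitimate, since admissibility is assumed); the paper's route is more self-contained in that it \emph{derives} the feedback interpretation $\mathbf{u}_\mathbf{e} = \bm{\Phi}_\mathbf{k}\bm{\Phi}_\mathbf{e}^{-1}\mathbf{e}$ as the closed-loop law of the error system rather than identifying it after the fact. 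Your explicit check that $\bm{\Phi}_\mathbf{e}$ is invertible (so that $\bm{\Phi}_\mathbf{k}\bm{\Phi}_\mathbf{e}^{-1}\mathbf{e} = \bm{\Phi}_\mathbf{k}\bm{\delta}$), your bookkeeping of the augmented row and column of \eqref{affine_SLS:affine-halfspace}, and the observed superposition $\mathbf{u} = \mathbf{u}_\mathbf{z} + \mathbf{u}_\mathbf{e}$ are all sound additions not spelled out in the paper.
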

\begin{proof}
See Appendix~\ref{apx:proof-corollary}.
\end{proof}

Corollary~\ref{corollary:decoupling} provides the separation into error and nominal dynamics commonly encountered in tube-based robust MPC methods, even though the nominal and error state definitions are not exactly the same. Here, $\bm{\phi}_z$ and $\bm{\Phi}_x^{:,0}x_0$ constitute the nominal state, of which the latter part is contributed by the tube controller, while the nominal state of the standard tube formulation in~\eqref{eq:nominal-dynamics} is independent of the tube controller. The same applies for the nominal input.

Note that~\eqref{affine_SLS:affine_halfspace_affine_part} defines a dynamical system similar to~\eqref{eq:stacked_dynamics}, but with $\bm{\delta} = \bm{0}$ and therefore zero initial state. However, the dynamics~\eqref{affine_SLS:affine_halfspace_affine_part} can similarly be rewritten to allow non-trivial initial states by replacing $\bm{\phi}_\mathbf{z}$ in~\eqref{eq:affine_structure} with $\bm{\phi}_\mathbf{z} - \left( I - \mathcal{ZA}\right)^{-1}\bm{\delta}_\mathbf{z}$, where $\bm{\delta}_\mathbf{z} = \begin{bmatrix} \bm{\phi}_\mathbf{z}^0 & \bm{0}^\top\end{bmatrix}^\top$. All computations in the proof hold for any choice of $\bm{\phi}_\mathbf{z}$, thus~\eqref{affine_SLS:affine_halfspace_affine_part} becomes
\begin{equation}\label{affine_SLS:affine_halfspace_generalized_affine_part}
\begin{bmatrix} I - \mathcal{ZA} & -\mathcal{ZB}\end{bmatrix} \begin{bmatrix} \bm{\phi}_\mathbf{z} \\ \bm{\phi}_\mathbf{v} \end{bmatrix} = \bm{\delta}_\mathbf{z}.
\end{equation}
At the same time, the error state definition in~\eqref{error_system:dynamics} is changed to $\mathbf{e} = \mathbf{x} - \bm{\phi}_\mathbf{z} + \left( I - \mathcal{ZA}\right)^{-1}\!\bm{\delta}_\mathbf{z}$, which removes the contribution of the initial nominal state~$\bm{\phi}_\mathbf{z}^0$ from the error dynamics.

We use Theorem~\ref{theorem:affine_SLS} to formulate the proposed SLTMPC problem, using the diagonally restricted system responses $\bar{\bm{\Phi}}_\mathbf{e}$ and $\bar{\bm{\Phi}}_\mathbf{k}$
\begin{subequations}\label{SLTMPC}
	\begin{alignat}{2}
		\min_{\tilde{\bm{\Phi}}_\mathbf{x},\tilde{\bm{\Phi}}_\mathbf{u}} \quad & \norm{\begin{bmatrix} \mathcal{Q}^\frac{1}{2} & 0 \\ 0 & \mathcal{R}^\frac{1}{2}\end{bmatrix} \begin{bmatrix} \bm{\phi}_\mathbf{z} + \bar{\bm{\Phi}}_\mathbf{e}^{:,0}x_0 \\ \bm{\phi}_\mathbf{v} + \bar{\bm{\Phi}}_\mathbf{k}^{:,0}x_0 \end{bmatrix}}^2_2 \label{SLTMPC:cost}\\
		\textrm{s.t. } & \begin{bmatrix} I - \mathcal{Z}\tilde{\mathcal{A}} & -\mathcal{Z}\tilde{\mathcal{B}}\end{bmatrix} \begin{bmatrix} \tilde{\bm{\Phi}}_\mathbf{x} \\ \tilde{\bm{\Phi}}_\mathbf{u} \end{bmatrix} = I, \\
		& \bm{\phi}_\mathbf{z}^{:N} + \bar{\bm{\Phi}}_\mathbf{e}^{:N,:}\bm{\delta} \in \mathcal{X}^N, &&\hspace{-0.5cm}\forall\mathbf{w} \in \mathcal{W}^{N-1} \label{SLTMPC:state_constraints} \\
        & \bm{\phi}_\mathbf{v} + \bar{\bm{\Phi}}_\mathbf{k}\bm{\delta} \in \mathcal{U}^N, &&\hspace{-0.5cm}\forall\mathbf{w} \in \mathcal{W}^{N-1} \label{SLTMPC:input_constraints} \\
        & \bm{\phi}_\mathbf{z}^{N} + \bar{\bm{\Phi}}_\mathbf{e}^{N,:}\bm{\delta} \in \mathcal{X}_f, &&\hspace{-0.5cm}\forall\mathbf{w} \in \mathcal{W}^{N-1} \label{SLTMPC:terminal_constraint} \\
        & \bm{\delta}^0 = x_k,
	\end{alignat}
\end{subequations}
where $\mathcal{Q}$, $\mathcal{R}$, and $\mathcal{X}_f$ are defined as in~\eqref{MPC:generic}. The input applied to the system is then computed as $u_k = \bm{\phi}_\mathbf{v}^0 + \bm{\Phi}_\mathbf{k}^{0,0}\bm{\delta}^0 = \phi_v^0 + \Phi_u^{0}x_k$. In order to render the constraints~\eqref{SLTMPC:state_constraints} -~\eqref{SLTMPC:terminal_constraint} amenable for optimization, we rely on a standard technique from \df-MPC~\cite{Goulart2006}, which removes the explicit dependency on the disturbance trajectory by employing Lagrangian dual variables. This corresponds to an implicit representation of the error dynamics' reachable sets commonly used for constraint tightening in \tube-MPC.

\begin{remark}
Under the assumption that the sets $\mathcal{X}$, $\mathcal{U}$, and $\mathcal{W}$ are polytopic, the constraint tightening approach using dual variables~\cite{Goulart2006} is equivalent to the standard constraint tightening based on reachable sets, usually employed in \tube-MPC. This can be shown using results from~\cite{Kolmanovsky1998} on support functions, Pontryagin differences, and their relation in the context of polytopic sets. The main difference is that the constraint tightening based on reachable sets is generally performed offline, while the formulation using dual variables is embedded in the online optimization.
\end{remark}

\begin{remark}
The cost~\eqref{SLTMPC:cost} is equivalent to a standard quadratic cost on the nominal variables~\eqref{eq:extended_SLP}, which renders~\eqref{SLTMPC:cost} identical to the quadratic form $\mathbf{z}^T\mathcal{Q}\mathbf{z} + \mathbf{v}^T\mathcal{R}\mathbf{v}$. Other cost functions can be similarly considered, e.g. not only including the nominal states and inputs but also the effect of the tube controller $\mathbf{K}$.
\end{remark}

\begin{remark}
Additional uncertainty in the system matrices~$A,\,B$ can be incorporated in~\eqref{SLTMPC} using an adapted version of the robust\footnote{Please note that the definition of robustness in the context of the SLP does not directly correspond to the robustness definition commonly used in MPC. For more details on the robust SLP formulation see~\cite[Section~2.3]{Anderson2019}.} SLP formulation~\cite[Section~2.3]{Anderson2019}. The robust SLP formulation was already used in~\cite{Chen2020} to derive a tube-based MPC method, which can handle both additive and system uncertainty. The affine extension presented in this paper can be similarly adapted to formulate a SLTMPC method, which can also handle system uncertainties.
\end{remark}

Using Theorem~\ref{theorem:equivalence} we can relate \df-MPC, \tube-MPC, and SLTMPC to each other. While all three methods employ nominal states and nominal inputs, they differ in the definition of the tube controller. Df-MPC does not restrict the tube controller and treats it as a decision variable. Contrary, \tube-MPC fixes the tube controller a~priori and does not optimize over it. SLTMPC positions itself between those two extremes by allowing time-varying controllers, while preventing changes over the prediction horizon. As a result, SLTMPC improves performance over standard \tube-MPC by offering more flexibility, rendering it computationally more expensive, yet cheaper than \df-MPC. The problem sizes of all three methods are stated in Table~\ref{table:complexity}.

\begin{figure*}[t]
\centering\vspace*{-2mm}
\includegraphics[width=0.99\linewidth]{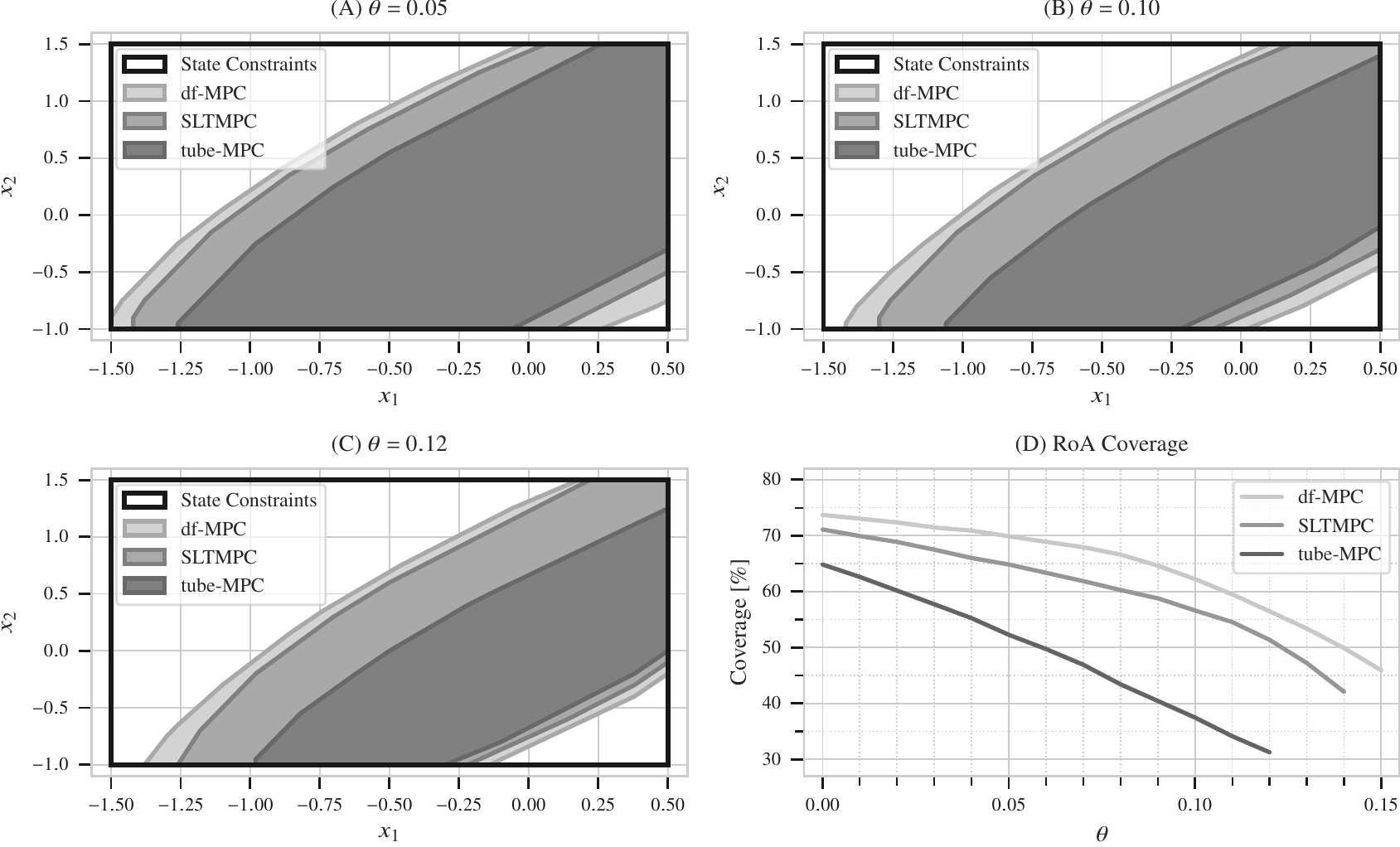}\caption{RoA of SLTMPC and tube-MPC for~${\theta = \{ 0.05, 0.1, 0.12\}}$~(A,B,C) and approximate RoA coverage with respect to the state constraints in percent, as a function of the parameter~$\theta$~(D).}
\label{fig:RoA-coverage}
\end{figure*}

\subsection{Extensions of SLTMPC}
Finally, we outline two extensions of the SLTMPC method to improve scalability and reduce computation time via a distributed and an explicit formulation.

\begin{table}[t]
\centering
\caption{Problem size in terms of the number of variables and constraints for df-MPC, SLTMPC, and tube-MPC.}\label{table:complexity}
\begin{tabular}{@{}lcc@{}}
\toprule
& No. variables & No. constraints \\
\midrule
\df-MPC & $(N+1)(\frac{Nn}{2} + 1)(n+m)$ & $O(N(N+1)n^2)$ \\
SLTMPC  & $2\cdot(N+1)(n+1)(n+m)$ & $O(2(N+1)n^2)$ \\
\tube-MPC  & $(N+1)(n+m)$ & $O((N+1)n^2)$ \\
\bottomrule 
\end{tabular}
\end{table}

\paragraph{Distributed SLTMPC}
Using ideas from distributed control allows us to formulate a distributed version of~\eqref{SLTMPC} similar to the distributed SLP formulation proposed in~\cite{Anderson2019}. Since any structure on $\tilde{\bm{\Phi}}_\mathbf{x}$ and $\tilde{\bm{\Phi}}_\mathbf{u}$ can be enforced directly in the optimization problem~\eqref{SLTMPC}~\cite{Anderson2019}, this provides a straightforward way to formulate a distributed SLTMPC scheme. Using the concept of sparsity invariance~\cite{Furieri2019b}, any distributed structure on $\tilde{\mathbf{K}}$ can be reformulated as structural conditions on $\tilde{\bm{\Phi}}_\mathbf{x}$ and $\tilde{\bm{\Phi}}_\mathbf{u}$. This in turn enables us to rewrite~\eqref{SLTMPC} with the additional constraints $$\tilde{\bm{\Phi}}_\mathbf{x} \in \mathcal{K}_x, \qquad \tilde{\bm{\Phi}}_\mathbf{u} \in \mathcal{K}_u, $$ where $\mathcal{K}_x$ and $\mathcal{K}_u$ are sparsity patterns.

\paragraph{Explicit SLTMPC}
The computational complexity associated with optimization problem~\eqref{SLTMPC} can be significant, especially for long horizons and large state spaces. This motivates an explicit formulation of the SLTMPC method, which allows pre-computation of the control law and the corresponding set of initial states. One way to formulate such an explicit scheme is to solve the Karush-Kuhn-Tucker conditions as a function of the initial state and thus obtain a control policy for each partition of the set of initial states~\cite{Bemporad2000}. In the context of SLS-MPC, an adapted version of this method was proposed in~\cite{Alonso2020}. Another option is to approximate the explicit solution by defining the sets of initial states manually and solve the SLTMPC problem for a whole set of initial states, similar to the method presented in~\cite{Carron2020}.

\section{NUMERICAL RESULTS}\label{sec:numerical_section}
In the following, we show the benefits of the proposed SLTMPC method for an example by comparing it against \df-MPC and \tube-MPC. The example is implemented in Python using CVXPY~\cite{cvxpy} and is solved using MOSEK~\cite{mosek}. The example was run on a machine equipped with an Intel~i9~(\unit[4.3]{GHz}) CPU and \unit[32]{GB} of RAM.

We consider the uncertain LTI system~\eqref{eq:dynamics} with discrete-time dynamic matrices
\begin{equation*}
A = \begin{bmatrix} 1 & 0.15 \\ 0 & 1 \end{bmatrix}, \quad B = \begin{bmatrix} 0.5 \\ 0.5 \end{bmatrix},
\end{equation*}
subject to the polytopic constraints
\begin{equation*}
\begin{bmatrix} -1.5 \\ -1 \end{bmatrix} \!\leq\! \begin{bmatrix} x_1 \\ x_2 \end{bmatrix} \!\leq\! \begin{bmatrix} 0.5 \\ 1.5 \end{bmatrix}, \!\ -1 \!\leq\! u \!\leq\! 1, \!\
\begin{bmatrix} -\theta \\ -0.1 \end{bmatrix} \!\leq\! \begin{bmatrix} w_1 \\ w_2 \end{bmatrix} \!\leq\! \begin{bmatrix} \theta \\ 0.1 \end{bmatrix}
\end{equation*}
with state cost $Q=I$, input cost $R=10$, horizon $N=10$, and parameter $\theta$. As the nominal terminal constraint we choose the origin, i.e. ${z_N = [0,0]^\top}$. We compare the region of attraction (RoA) and performance of SLTMPC~\eqref{SLTMPC} to those of \df-MPC and \tube-MPC, where the constant tube controller~$K$ is computed such that it minimizes the constraint tightening~\cite[Section~7.2]{Limon2010}. Figure~\ref{fig:RoA-coverage}~(A,\,B,\,C) shows the RoAs for three different noise levels, i.e. $\theta = \{ 0.05, 0.1, 0.12\}$. SLTMPC achieves a RoA which is considerably larger than the RoA of \tube-MPC and comparable to the one of \df-MPC. Figure~\ref{fig:RoA-coverage}~(D) further highlights this by approximating the coverage of the RoA, i.e. the area of the state constraints covered by the RoA in percent. RoA coverage of SLTMPC is consistently larger than for \tube-MPC and decreases more slowly as the disturbance parameter $\theta$ is increased. This behavior is comparable to that of \df-MPC, although SLTMPC achieves lower RoA coverage. All three methods become infeasible and thus achieve a coverage of~$0\%$ for $\theta_{tube}=0.13$, $\theta_{SLTMPC}=0.15$, and $\theta_{df} = 0.16$, respectively. Figure~\ref{fig:trajectories} shows the open-loop nominal trajectory and trajectories for 10'000 randomly sampled noise realizations for all three methods. Due to the optimized tube controller and less conservative constraint tightening, \df-MPC and SLTMPC compute trajectories that approach the constraints more closely than \tube-MPC. Table~\ref{table:comp-times} states the computation times and costs of the methods for the noisy trajectories shown in Figure~\ref{fig:trajectories}, highlighting the trade-off in computation time and performance offered by SLTMPC.

\begin{figure}[h!]
\centering\vspace*{-3mm}
\includegraphics[width=0.99\columnwidth]{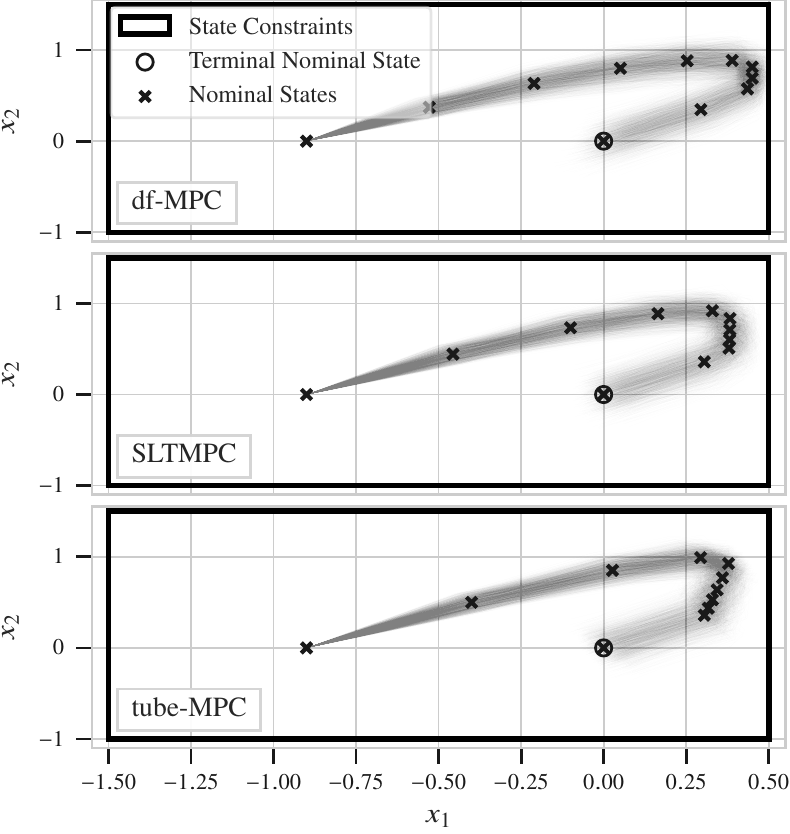}\caption{Nominal state trajectory and 10'000 noise realizations from initial state $x_0 = [-0.9, 0.0]^\top$ and parameter $\theta=0.05$ for df-MPC, SLTMPC, and tube-MPC.}
\label{fig:trajectories}\vspace*{-1mm}
\end{figure}

\section{CONCLUSIONS}\label{sec:conclusions}
This paper has proposed a \tube-MPC method for linear time-varying systems with additive noise based on the system level parameterization (SLP). The formulation was derived by first establishing the equivalence between disturbance feedback MPC (\df-MPC) and the SLP, offering a new perspective on a subclass of robust MPC methods from the angle of SLP. Subsequently, the standard SLP was extended in order to formulate the proposed system level \tube-MPC~(SLTMPC) method. Finally, we showed the effectiveness of the proposed method by comparing it against \df-MPC and \tube-MPC on a numerical example.

\begin{table}[t]
\centering
\caption{Closed-loop costs and computation times for 10'000 noisy trajectories starting in $x_0 = [-0.9, 0.0]^\top$.}\label{table:comp-times}
\begin{tabular}{@{}lcccc@{}}
\toprule
& \multicolumn{2}{c}{Cost [-]} & \multicolumn{2}{c}{Computation Time [ms]} \\ \cmidrule(lr){2-3} \cmidrule(lr){4-5}
& Mean & Std. Deviation & Mean & Std. Deviation \\ \midrule
\df-MPC & 24.61 & 1.76 & 53.21 & 6.66\\
SLTMPC & 26.38 & 1.87 & 33.49 & 6.35\\
\tube-MPC & 30.44 & 1.98 & 6.9 & 1.43\\
\bottomrule 
\end{tabular}
\end{table}

\section*{Acknowledgment}
We would like to thank Carlo Alberto Pascucci from Embotech for the insightful discussions on the topic of this work.

\appendices
\section{Proof of Lemma~\ref{lemma:MPC_to_SLS}}\label{apx:proof-lemma}
\begin{proof}
The matrices in~\eqref{eq:function_of_x0} are defined as
\begin{equation}\label{eq:A_E_def}
\mathbf{A} = \begin{bmatrix} I \\ A \\ A^2 \\ \vdots \\ A^N\end{bmatrix}, \quad \mathbf{E} = \begin{bmatrix} 0 & \dots & \dots & 0\\ I & 0 & \dots & \vdots \\ A  & I & \ddots & \vdots \\ \vdots & \ddots & \ddots & 0 \\ A^{N-1} & \dots & A & I\end{bmatrix},
\end{equation}
and $\mathbf{B}=\begin{bmatrix} \mathbf{A} & \mathbf{E} \end{bmatrix} \mathcal{ZB}$. These quantities can then be used to rewrite~\eqref{SLS:affine-halfspace} as
\begin{align}\label{lemma:second_relation}
\bm{\Phi}_\mathbf{x} &= \left(I - \mathcal{ZA}\right)^{-1} + \left(I - \mathcal{ZA}\right)^{-1}\mathcal{ZB}\bm{\Phi}_\mathbf{u} \nonumber\\
&\mathrel{\overset{\makebox[0pt]{\mbox{\normalfont\tiny\sffamily (i)}}}{=}}
 \begin{bmatrix} \mathbf{A} & \mathbf{E}\, \end{bmatrix} + \begin{bmatrix} \mathbf{A} & \mathbf{E}\, \end{bmatrix}\mathcal{ZB}\bm{\Phi}_\mathbf{u} \nonumber\\
 &= \begin{bmatrix} \mathbf{A} & \mathbf{E}\, \end{bmatrix} + \mathbf{B}\bm{\Phi}_\mathbf{u},
\end{align}
where relation $(i)$ follows from the fact that $\left(I - \mathcal{ZA}\right)^{-1}$ can be represented as a finite Neumann series since $\mathcal{ZA}$ is nilpotent with index $N$, which leads to
\begin{equation*}
\left(I - \mathcal{ZA}\right)^{-1} = \sum_{k=0}^\infty \left(\mathcal{ZA}\right)^k = \sum_{k=0}^N \left(\mathcal{ZA}\right)^k = \begin{bmatrix} \mathbf{A} & \mathbf{E} \end{bmatrix}.
\end{equation*}
Then,~\eqref{lemma:second_relation} proves Lemma~\ref{lemma:MPC_to_SLS}.
\end{proof}

\section{Proof of Corollary~\ref{corollary:decoupling}}\label{apx:proof-corollary}
\begin{proof}
Given an admissible $(\tilde{\bm{\Phi}}_\mathbf{x}, \tilde{\bm{\Phi}}_\mathbf{u})$ with structure~\eqref{eq:affine_structure}, the affine state trajectory is
\begin{equation}
\begin{bmatrix} 1 \\ \mathbf{x} \end{bmatrix} = \mathcal{Z}\tilde{\mathcal{A}}\begin{bmatrix} 1 \\ \mathbf{x} \end{bmatrix} + \mathcal{Z}\tilde{\mathcal{B}}\tilde{\bm{\Phi}}_\mathbf{u}\tilde{\bm{\Phi}}_\mathbf{x}^{-1}\begin{bmatrix} 1 \\ \mathbf{x} \end{bmatrix} + \tilde{\bm{\delta}}, \label{eq:affine_state_trajectory}
\end{equation}
where
\begin{equation}
\tilde{\bm{\Phi}}_\mathbf{u}\tilde{\bm{\Phi}}_\mathbf{x}^{-1} \!=\! \begin{bmatrix} \bm{\phi}_\mathbf{v} & \hspace{-0.5em}\bm{\Phi}_\mathbf{k} \end{bmatrix}\!\begin{bmatrix} 1 & \hspace{-0.5em}\bm{0} \\ -\bm{\Phi}_\mathbf{e}^{-1}\bm{\phi}_\mathbf{z} & \hspace{-0.5em}\bm{\Phi}_\mathbf{e}^{-1} \end{bmatrix} \!=\! \begin{bmatrix} \bm{\phi}_\mathbf{v} \!-\! \bm{\Phi}_\mathbf{k}\bm{\Phi}_\mathbf{e}^{-1}\bm{\phi}_\mathbf{z} \\ \bm{\Phi}_\mathbf{k}\bm{\Phi}_\mathbf{e}^{-1} \end{bmatrix}^T\!\!, \label{eq:affine_SLS_controller}
\end{equation}
then plugging~\eqref{eq:affine_SLS_controller} into~\eqref{eq:affine_state_trajectory} yields
\begin{equation}
\mathbf{x} \!=\! \left(\mathcal{ZA} \!+\! \mathcal{ZB}\bm{\Phi}_\mathbf{k}\bm{\Phi}_\mathbf{e}^{-1}\right)\mathbf{x} \!+\! \mathcal{ZB}\bm{\phi}_\mathbf{v} \!+\! \bm{\delta} \!-\! \mathcal{ZB}\bm{\Phi}_\mathbf{k}\bm{\Phi}_\mathbf{e}^{-1}\bm{\phi}_\mathbf{z}. \label{eq:affine_evolution}
\end{equation}
Adding $-\left(I - \mathcal{ZA}\right)\bm{\phi}_\mathbf{z}$ on both sides of~\eqref{eq:affine_evolution}, results in
\begin{align}
&\left(I - \mathcal{ZA} - \mathcal{ZB}\bm{\Phi}_\mathbf{k}\bm{\Phi}_\mathbf{e}^{-1}\right)\left(\mathbf{x} - \bm{\phi}_\mathbf{z} \right) = \nonumber\\ &- \left(I - \mathcal{ZA}\right)\bm{\phi}_\mathbf{z} + \mathcal{ZB}\bm{\phi}_\mathbf{v} + \bm{\delta}. \label{eq:condition_for_lemma}
\end{align}
Using~\eqref{affine_SLS:affine_halfspace_affine_part}, we conclude that
${-\left(I - \mathcal{ZA}\right)\bm{\phi}_\mathbf{z} + \mathcal{ZB}\bm{\phi}_\mathbf{v} = \bm{0}},$
hence~\eqref{eq:condition_for_lemma} becomes $\left(I - \mathcal{ZA} - \mathcal{ZB}\bm{\Phi}_\mathbf{k}\bm{\Phi}_\mathbf{e}^{-1}\right)\mathbf{e} = \bm{\delta},$ and thus
\begin{equation}
\mathbf{e} = \left(I - \mathcal{ZA} - \mathcal{ZB}\bm{\Phi}_\mathbf{k}\bm{\Phi}_\mathbf{e}^{-1}\right)^{-1}\bm{\delta} = \bm{\Phi}_\mathbf{e}\bm{\delta},
\end{equation}
which describes the closed loop behavior of a system with state $\mathbf{e} = \mathbf{x} - \bm{\phi}_\mathbf{z}$ under the control law $\mathbf{u}_\mathbf{e} = \bm{\Phi}_\mathbf{k}\bm{\Phi}_\mathbf{e}^{-1}\mathbf{e}$. The dynamics~\eqref{nominal_system:dynamics} are directly given by~\eqref{affine_SLS:affine_halfspace_affine_part}. Therefore,~\eqref{error_system:dynamics} and~\eqref{nominal_system:dynamics} define decoupled dynamical systems, with the inputs $\mathbf{u}_\mathbf{e} = \bm{\Phi}_\mathbf{k}\bm{\delta}$ and $\mathbf{u}_\mathbf{z} = \bm{\phi}_\mathbf{v}$.
\end{proof}

% \vspace{1.8cm}
% \pagebreak

% \addtolength{\textheight}{-2cm}   % This command serves to balance the column lengths
                                  % on the last page of the document manually. It shortens
                                  % the textheight of the last page by a suitable amount.
                                  % This command does not take effect until the next page
                                  % so it should come on the page before the last. Make
                                  % sure that you do not shorten the textheight too much.

%%%%%%%%%%%%%%%%%%%%%%%%%%%%%%%%%%%%%%%%%%%%%%%%%%%%%%%%%%%%%%%%%%%%%%%%%%%%%%%%

%%%%%%%%%%%%%%%%%%%%%%%%%%%%%%%%%%%%%%%%%%%%%%%%%%%%%%%%%%%%%%%%%%%%%%%%%%%%%%%%

%%%%%%%%%%%%%%%%%%%%%%%%%%%%%%%%%%%%%%%%%%%%%%%%%%%%%%%%%%%%%%%%%%%%%%%%%%%%%%%%

\bibliography{bibliography.bib}

\end{document}